\theoremstyle{plain}
\newtheorem{theorem}{Theorem}[section] 
\newtheorem{lemma}[theorem]{Lemma}
\newtheorem{definition}[theorem]{Definition}
\theoremstyle{remark}
\newtheorem{remark}{Remark}[section]
\newcommand\Pf{\mathrm{Pf}}
\begin{document}

\title{\mbox{Planar Ising model at criticality:} \mbox{state-of-the-art and perspectives}}

\author{Dmitry Chelkak}

\address{Holder of the ENS--MHI chair funded by MHI. \newline
\indent D\'epartement de math\'ematiques et applications de l'ENS, \newline
\indent \'Ecole Normale Sup\'erieure PSL Research University, CNRS UMR 8553, Paris 5\`eme. \newline
\indent On leave from St.~Petersburg Department of Steklov Mathematical Institute RAS.}

\email{dmitry.chelkak@ens.fr}

\begin{abstract} In this essay, we briefly discuss recent developments, started a decade ago in the seminal work of Smirnov and continued by a number of authors, centered around the conformal invariance of the critical planar Ising model on~$\mathbb{Z}^2$ and, more generally, of the critical Z-invariant Ising model on isoradial graphs (rhombic lattices). We also introduce a new class of embeddings of general weighted planar graphs ({s-embeddings}), which might, in particular, pave the way to true universality results for the planar Ising model.
\end{abstract}

\keywords{2D Ising model, conformal invariance, s-holomorphicity, s-embeddings}

\subjclass[2010]{Primary 82B20; Secondary 30G25, 60J67, 81T40}

\maketitle

\section{Introduction}

The two-dimensional Ising model, introduced by Lenz almost a hundred years ago, does not need an introduction, being probably the most famous example of a statistical mechanics system exhibiting the phase transition and the conformally invariant behavior at criticality, as well as an inspiring structure of the correlation functions both at the critical point and in a vicinity of it; e.g., see~\cite{mccoy-wu-book} and~\cite{mussardo-book}. More recently, it became a playground for mathematicians interested in a rigorous understanding of the conjectural conformal invariance of critical lattice systems~\cite{smirnov-06icm}.

What makes the \emph{planar} (a priori, not necessarily critical) Ising model particularly feasible for such a mathematical treatment (in absence of the magnetic field) is the underlying structure of \emph{\mbox{s-holo}\-morphic spinors} (aka \emph{fermionic observables}), essentially dating back to the work of Onsager and Kaufman and reinterpreted several times since then, notably by Kadanoff and Ceva~\cite{kadanoff-ceva}. From the classical analysis (or probability theory) perspective, these \mbox{s-holo}morphic spinors can be thought of as counterparts of discrete harmonic functions associated to random-walk-based systems. The main theme of this note is recent convergence results for the critical model based on an analysis of such observables in the small mesh size limit.

The text should \emph{not} be considered as a survey: certainly, many results and references deserving to be mentioned in such a survey are missing. Its purposes are rather to give a general informal description of the current state-of-the-art of the subject (from the personal viewpoint of the author and to the best of his knowledge) for readers not interested in technical details; to provide some references for those interested; and to indicate several ongoing research directions and open questions.

I wish to thank my co-authors as well as many other colleagues for numerous helpful discussions (centered at Geneva a decade ago, worldwide nowadays) of the planar Ising model and for sharing their knowledge and ideas with others. Clearly, the progress described below was achieved by collective efforts and it is a privilege to discuss the results of hard work of a rather big community in this \emph{essay}.

\newpage

\section{Discrete spinors and s-holomorphicity in the planar Ising model}
\label{sec:formalism}

\subsection{Notation and the Kramers--Wannier duality}

Below we consider the ferromagnetic Ising model on \emph{faces} of a graph~$G$ embedded into the complex plane~$\mathbb{C}$ so that all its edges are straight segments. One can also work with graphs embedded into Riemann surfaces but for simplicity we prefer not to discuss such a generalization here (see~\cite[Section~4]{chelkak-cimasoni-kassel} and references therein for more details). A \emph{spin configuration}~$\sigma$ is an assignment of a~$\pm 1$ spin~$\sigma_u$ to each face of~$G$, including the outer face~$u_\mathrm{out}$, with the spin~$\sigma_{u_{\mathrm{out}}}$ playing the role of boundary conditions. The probability measure~$\mathbb{P}^\circ$ (the superscipt~$\circ$ emphasizes the fact that the model is considered on faces of~$G$) on the set of spin configurations is given by
\begin{equation}
\label{eq:Ising_proba}
\textstyle \mathbb{P}^\circ(\sigma)=(\mathcal{Z}^\circ(G))^{-1}\exp\big[\beta\sum_{e} J_{e}\sigma_{u_-(e)}\sigma_{u_+(e)}\big],
\end{equation}
where~$\beta>0$ is the inverse temperature, $J_e>0$ are fixed interaction constants, the summation is over unoriented edges~of~$G$ (an edge~$e$ separates two faces~$u_\pm(e)$), and~$\mathcal{Z}^\circ(G)$ is the normalization constant called the \emph{partition function}. Note that the spin of the outer face~$u_\mathrm{out}$ of~$G$ plays the role of boundary conditions and one can always break the~$\mathbb{Z}_2$ (spin-flip) symmetry of the model by assuming~$\sigma_{u_\mathrm{out}}=+1$.

Abusing the notation slightly, we also admit the situation when~$J_e=0$ along some boundary arcs of~$G$, which means that the corresponding near-to-boundary spins do not interact with~$\sigma_{u_\mathrm{out}}$. We call these parts of the boundary of~$G$ \emph{free arcs} and use the name \emph{wired arcs} for the remaining parts, across which the inner spins interact with the \emph{same}~$\sigma_{u_{\mathrm{out}}}=+1$. We will use the name \emph{standard boundary conditions} for this setup, see also~\cite{chelkak-hongler-izyurov-mixed}. For simplicity, below we always assume that there exists at least one wired arc.

We denote by~$G^\bullet$ the graph obtained from~$G$ by removing the edges along free parts of the boundary and by~$G^\circ$ a graph dual to~$G$ with the following convention: instead of a single vertex corresponding to~$u_\mathrm{out}$ we draw one vertex per boundary edge on wired arcs. Combinatorially, \emph{all} these vertices of~$G^\circ$ should be thought of as wired together (hence the name) and, similarly, the vertices of~$G^\bullet$ along free parts of the boundary should be thought of as a collection of `macro-vertices', one per free arc. We assume that~$G^\circ$ is also embedded into~$\mathbb{C}$ so that all its edges are straight segments and denote by~$\diamondsuit(G)$ the set of quads~$(vuv'u')$ formed by pairs of (unoriented) dual edges~$(vv')$ and~$(uu')$ of~$G^\bullet$ and~$G^\circ$, respectively. We also denote by~$\partial\diamondsuit(G)$ the set of triangles $(vuv')$ and~$(uvu')$ arising instead of quads along free and wired boundary arcs, respectively, and set~$\overline{\diamondsuit(G)}:=\diamondsuit(G)\cup \partial\diamondsuit(G)$.
Finally, let~$\Omega(G)\subset \mathbb{C}$ be the polygon formed by all these quads and triangles.

For an unoriented edge~$e$ of~$G^\bullet$ (or, equivalently, an element of~$\diamondsuit(G)$), we define~$x_e:=\exp[-2\beta J_e]$ and extend this notation to the elements of~$\partial\diamondsuit(G)$ by setting~$x_e:=1$ on free arcs and~$x_e:=0$ on wired ones. For a subset~$C\subset\diamondsuit(G)$, denote~$x(C):=\prod_{e\in C}x_e$ and let~$\mathcal{E}(G)$ denote the set of all even subgraphs of~$G$. There exists a trivial bijection of this set and the set of spin configurations on faces of~$G$: draw edges separating misaligned spins. In particular, one sees that
\begin{equation}
\label{eq:Z_domain_walls}
\textstyle \mathcal{Z}^\circ(G)=\prod_{e\in\diamondsuit(G)} x_e^{-1/2}\cdot \mathcal{Z}(G),\quad \text{where}\quad \mathcal{Z}(G) := \sum_{C\in\mathcal{E}(G)}x(C),
\end{equation}
this is called the \emph{domain walls} (or low-temperature) \emph{expansion} of~$\mathcal{Z}^\circ(G)$. A remarkable fact (first observed by van der Waerden) is that the \emph{same}~$\mathcal{Z}(G)$ also gives an expression for the partition function~$\mathcal{Z}^\bullet(G)$ of the Ising model on~\emph{vertices} of~$G^\bullet$, provided that the dual parameters~$\beta^\bullet$ and~$J^\bullet_e$ satisfy the identity~$x_e=\tanh[\beta^\bullet J^\bullet_e]$. Namely, the following \emph{high-temperature expansion} of~$\mathcal{Z}^\bullet(G)$ holds true:
\begin{equation}
\label{eq:Z_high_temp}
\textstyle \mathcal{Z}^\bullet(G) = 
2^{|V(G^\bullet)|}\prod_{e\in\diamondsuit(G)}(1\!-\!x_e^2)^{-1/2}\cdot \mathcal{Z}(G),\qquad \mathcal{Z}(G)= \sum_{C\in\mathcal{E}(G)}x(C).
\end{equation}

This link between the Ising models on~$G^\circ$ and~$G^\bullet$ is called the \emph{Kramers--Wannier duality} and it is not limited to the equality between the partition functions~$\mathcal{Z^\circ}(G)$ and~$\mathcal{Z}^\bullet(G)$. Nevertheless, it is worth mentioning that similar objects typically lead to different types of sums in the two representations. E.g., in order to compute the expectation~$\mathbb{E}^\circ[\sigma_{u}\sigma_{u'}]$ similarly to~\eqref{eq:Z_domain_walls} one should keep track of the parity of the number of loops in~$C$ separating~$u$ and~$u'$, while computing the expectation~$\mathbb{E}^\bullet[\sigma_v\sigma_{v'}]$ amounts to the replacement of~$\mathcal{E}(G)$ in~\eqref{eq:Z_high_temp} by the set~$\mathcal{E}(G;v,v')$ of subgraphs of~$G$ in which~$v$ and~$v'$ (but no other vertex) have odd degrees. Also, note that the boundary conditions on~$G^\circ$ and~$G^\bullet$ are \emph{not} fully symmetric: dual spins on different free arcs are not required to coincide, contrary to the wired ones.

It is convenient to introduce the following parametrization~$\theta_e$ of the weights~$x_e$:
\begin{equation}
\label{eq:theta_e_def}
x_e=\exp[-2\beta J_e]=\tanh[\beta^\bullet J^\bullet_e]~=~\tan \tfrac{1}{2}\theta_e\,, \quad \theta_e\in [0,\tfrac{\pi}{2}].
\end{equation}
Note that, if we similarly define~$\theta^\bullet_e$ so that~$\tan\tfrac{1}{2}\theta^\bullet_e = \exp[-2\beta^\bullet J^\bullet_e]$, then~$\theta^\bullet_e=\frac{\pi}{2}-\theta_e$.

\subsection{Spins-disorders formalism of Kadanoff and Ceva}
Following~\cite{kadanoff-ceva}, we now describe the so-called \emph{disorder insertions} -- the objects dual to spins under the Kramers--Wannier duality. We also refer the interested reader to recent papers of Dub\'edat~\cite{dubedat-bosonization,dubedat-abelian}. Let~$m$ be even and~$v_1,...,v_m$ be a collection of vertices of~$G^\bullet$ (with the standard convention that each free arc should be considered as a single vertex). Let us fix a collection of paths~$\varkappa^{[v_1,...,v_m]}$ on~$G$ linking these vertices pairwise and change the interaction constants~$J_e$ to~$-J_e$ along these paths to get another probability measure on the spin configurations instead of~\eqref{eq:Ising_proba}. Note that one can think of this operation as putting an additional random weight~$\exp[-2\beta J_{uu'}\sigma_{u}\sigma_{u'}]$ along~$\gamma^{[v_1,...,v_m]}$ and treat this weight as a random variable, which we denote by~$\mu_{v_1}...\mu_{v_m}$ (note that its definition implicitly depends on the choice of disorder lines). The domain walls representation of the Ising model on~$G^\circ$ then gives
\begin{equation}
\label{eq:mu_domain_walls}
\textstyle \mathbb{E}^\circ[\mu_{v_1}...\mu_{v_m}]= x(\gamma^{[v_1,...,v_m]})\cdot \sum_{C\in\mathcal{E}(G)}x^{[v_1,...,v_m]}(C)\,/\, \mathcal{Z}(G),
\end{equation}
where the weights~$x^{[v_1,...,v_m]}$ are obtained from~$x$ by changing~$x_e$ to~$x_e^{-1}$ on disorder lines and the first factor comes from the prefactor in~\eqref{eq:Z_domain_walls}. More invariantly, one can consider the sign-flip-symmetric Ising model on \emph{faces} of the double-cover~$G^{[v_1,...,v_m]}$ of~$G$ ramified over the vertices~$v_1,...,v_m$ (the spins at two faces of~$G^{[v_1,...,v_m]}$ lying over the same face of~$G$ are required to have opposite values) and rewrite~\eqref{eq:mu_domain_walls} as
\begin{align}
\notag
\textstyle \mathbb{E}^\circ[\mu_{v_1}...\mu_{v_m}] & = \mathcal{Z}^\circ(G^{[v_1,...,v_m]})\,/\,\mathcal{Z}^\circ(G) \\
& \label{eq:mu_circ=sigma_bullet}
\textstyle  = \sum_{C\in\mathcal{E}(G;v_1,...,v_m)}x(C)\,/\,\mathcal{Z}(G)= \mathbb{E}^\bullet[\sigma_{v_1}...\sigma_{v_m}],
\end{align}
where~$\mathcal{E}(G;v_1,...,v_m)$ stands for the set of subgraphs of~$G$ in which all~$v_1,...,v_m$ (but no other vertex) have odd degrees; the last equality is the classical high-temperature expansion of spin correlations on~$G^\bullet$ mentioned above.

Vice versa, given an even~$n$ and a collection of faces~$u_1,...,u_n$ of~$G$, one can write
\begin{align}
\notag \mathbb{E}^\circ[\sigma_{u_1}...\sigma_{u_n}]&\textstyle =\sum_{C\in\mathcal{E}(G)}x_{[u_1,...,u_n]}(C)\,/\,\mathcal{Z}(G)\\
& \label{eq:sigma_circ=mu_bullet}
=\mathcal{Z}^\bullet(G_{[u_1,...,u_n]})\,/\, \mathcal{Z}^\bullet(G)=\mathbb{E}^\bullet[\mu_{u_1}...\mu_{u_n}],
\end{align}
where the weights $x_{[u_1,...,u_n]}$ are obtained from~$x$ by putting additional minus signs on the edges of~$\gamma_{[u_1,...,u_n]}$ and~$\mathcal{Z}^\bullet(G_{[u_1,...,u_n]})$ denotes the partition function of the spin-flip symmetric Ising model on \emph{vertices} of the double-cover~$G_{[u_1,...,u_n]}$ of~$G$ ramified over faces~$u_1,...,u_m$, treated via the high-temperature expansion.
Generalizing~\eqref{eq:mu_circ=sigma_bullet} and~\eqref{eq:sigma_circ=mu_bullet}, one has the following duality between spins and disorders:
\begin{align}
\label{eq:mu_sigma_circ=sigma_mu_bullet}
\mathbb{E}^\circ[\mu_{v_1}...\mu_{v_m}\sigma_{u_1}...\sigma_{u_n}] = \mathbb{E}^\bullet[\sigma_{v_1}...\sigma_{v_m}\mu_{u_1}...\mu_{u_n}]
\end{align}
since both sides are equal to~$\textstyle \sum_{C\in\mathcal{E}(G;v_1,...,v_n)}x_{[u_1,...,u_n]}(C)\,/\,\mathcal{Z}(G)$.
Let us emphasize that one needs to fix disorder lines in order to interpret these quantities as expectations with respect to~$\mathbb{P}^\circ$ and~$\mathbb{P}^\bullet$, respectively. Below we prefer a more invariant approach and view~$u_q$'s as faces of the double-cover~$G^{[v_1,...,v_n]}$ and the expectation taken with respect to the sign-flip symmetric Ising model defined on faces of this double-cover. To avoid possible confusion, we introduce the notation
\begin{equation}
\label{eq:corr_mu_sigma_def}
\langle\mu_{v_1}...\mu_{v_m}\sigma_{u_1}...\sigma_{u_n}\rangle:= \mathbb{E}^\circ_{G^{[v_1,...,v_m]}}[\sigma_{u_1}...\sigma_{u_n}]
\end{equation}
instead of~\eqref{eq:mu_sigma_circ=sigma_mu_bullet}. Considered as a function of both~$v_p$'s and~$u_q$'s, \eqref{eq:corr_mu_sigma_def} is defined on a double-cover~$G^{m,n}_{[\bullet,\circ]}$ of~$(G^\bullet)^m\times(G^\circ)^n$ and changes the sign each time when one of~$v_p$ turns around one of~$u_q$ or vice versa; we call such functions \emph{spinors} on~$G^{m,n}_{[\bullet,\circ]}$.

We also need some additional notation. Let~$\Lambda(G):=G^\bullet\cup G^\circ$ be the planar graph formed by the sides of quads from~$\diamondsuit(G)$ and let~$\Upsilon(G)$ denote the medial graph of~$\Lambda(G)$. In other words, the vertices of~$\Upsilon(G)$ correspond to edges~$(uv)$ of~$\Lambda(G)$ or to \emph{corners} of~$G$, while the faces of~$\Upsilon(G)$ correspond either to vertices of~$G^\bullet$ or to vertices of~$G^\circ$ or to quads from~$\diamondsuit(G)$. Denote by~$\Upsilon^\times(G)$ a double-cover of~$\Upsilon(G)$ which branches over each of its faces (e.g., see~\cite[Fig.~27]{mercat-CMP} or~\cite[Fig.~6]{chelkak-smirnov}). Note that~$\Upsilon^\times(G)$ is fully defined by this condition for graphs embedded into~$\mathbb{C}$ but on Riemann surfaces there is a choice that can be rephrased as the choice of a \emph{spin structure} on the surface. Below we discuss spinors on~$\Upsilon^\times(G)$, i.e. the functions whose values at two vertices of~$\Upsilon^\times(G)$ lying over the same vertex of~$\Upsilon(G)$ differ by the sign. An important example of such a function is the \emph{Dirac spinor}
\begin{equation}
\label{eq:Dirac_spinor}
\textstyle \eta_c:=\varsigma\cdot\exp[-\frac{i}{2}\arg(v(c)-u(c))],\quad \text{where}\quad c=(u(c)v(c))\in \Upsilon^\times(G),
\end{equation}
$u(c)\in G^\circ$,~$v(c)\in G^\bullet$ and a global prefactor~$\varsigma:|\varsigma|\!=\!1$ is added to the definition for later convenience. If~$G$ was embedded into a Riemann surface~$\Sigma$, one should fix a vector field on~$\Sigma$ with zeroes of even index to define the~$\arg$ function (cf.~\cite[Section~4]{chelkak-cimasoni-kassel}), in the case~$\Sigma=\mathbb{C}$ we simply consider a constant vector field~$\varsigma^2$.

Given a corner~$c$ of~$G$, we formally define~$\chi_c:=\mu_{v(c)}\sigma_{u(c)}$, i.e.
\begin{equation}
\label{eq:chi:=mu_sigma_def}
\langle\chi_c\mu_{v_1}...\mu_{v_{m-1}}\sigma_{u_1}...\sigma_{u_{n-1}}\rangle:= \langle\mu_{v(c)}\mu_{v_1}...\mu_{v_{m-1}}\sigma_{u(c)}\sigma_{u_1}...\sigma_{u_{n-1}}\rangle.
\end{equation}
According to the preceding discussion of mixed spins-disorders expectations, for a given collection~$\varpi:=\{v_1,...,v_{m-1},u_1,...,u_{n-1}\}\in (G^\bullet)^{m-1}\times(G^\circ)^{n-1}$, the function~\eqref{eq:chi:=mu_sigma_def} locally behaves as a spinor on~$\Upsilon^\times(G)$ but its global branching structure is slightly different as the additional sign change arises when~$c$ turns around one of~$v_p$ or~$u_q$. Let us denote the corresponding double-cover of~$\Upsilon(G)$ by~$\Upsilon^\times_\varpi(G)$. Finally, define~$\psi_c:=\eta_c\chi_c$, where~$\eta_c$ is defined by~\eqref{eq:Dirac_spinor}, and note that functions
\begin{equation}
\label{eq:psi:=eta_chi_def}
\langle\psi_c\mu_{v_1}...\mu_{v_{m-1}}\sigma_{u_1}...\sigma_{u_{n-1}}\rangle:= \eta_c\langle\chi_c\mu_{v_1}...\mu_{v_{m-1}}\sigma_{u_1}...\sigma_{u_{n-1}}\rangle
\end{equation}
do not branch locally (because of the cancellation of sign changes of~$\chi_c$ and~$\eta_c$) and change the sign only when~$c$ turns around one of the vertices $v_p$ or the faces~$u_q$. We denote the corresponding (i.e., ramified over~$\varpi$) double-cover of~$\Upsilon(G)$ by~$\Upsilon_\varpi(G)$.

\subsection{S-holomorphicity} \label{sub:s-hol}
This section is devoted to the crucial three-term equation for the functions~\eqref{eq:chi:=mu_sigma_def}, the so-called \emph{propagation equation} for spinors on~$\Upsilon^\times(G)$ (and also known as the Dotsenko--Dotsenko equation, see~\cite{mercat-CMP} and~\cite[Section~3.5]{chelkak-cimasoni-kassel}). To simplify the presentation, we introduce the following notation: for a quad $z_e$ corresponding to an edge~$e$ of~$G$, we denote its vertices, \emph{listed counterclockwise}, by~$v^\bullet_0(z_e)$, $v^\circ_0(z_e)$, $v^\bullet_1(z_e)$, and~$v^\circ_1(z_e)$, where~$v^\bullet_0(z),v^\bullet_1(z)\in G^\bullet$ and~$v^\circ_0(z),v^\circ_1(z)\in G^\circ$ (the choice of~$v^\bullet_0(z)$ among the two vertices of~$G^\bullet$ is arbitrary). The corner of~$G$ corresponding to the edge~$(v^\bullet_p(z_e)v^\circ_q(z_e))$ of~$z_e$ is denoted by~$c_{pq}(z_e)\in\Upsilon(G)$. For shortness, we also often omit~$z_e$ in this notation if no confusion arises.

\begin{definition}
\label{def:s-hol}
A spinor~$F$ defined on~$\Upsilon^\times(G)$ or, more generally, on some~$\Upsilon^\times_\varpi(G)$ is called \emph{s-holomorphic} if its values at any three consecutive (on~$\Upsilon^\times_\varpi(G)$) corners $c_{p,1-q}(z_e)$, $c_{pq}(z_e)$ and $c_{1-p,q}(z_e)$ surrounding a quad~$z_e\in\diamondsuit(G)$ satisfy the identity
\begin{equation}
\label{eq:propagation_on_corners}
F(c_{pq})=F(c_{p,1-q})\cos\theta_e+F(c_{1-p,q})\sin\theta_e\,,
\end{equation}
where~$\theta_e$ stands for the parametrization~\eqref{eq:theta_e_def} of the Ising model weight~$x_e$ of~$e$.
\end{definition}

\begin{remark} In fact, a straightforward computation shows that~\eqref{eq:propagation_on_corners} implies the spinor property:~$F(c^\flat_{pq})=-F(c^\sharp_{pq})$ if~$c^\flat_{pq},c^\sharp_{pq}\in\Upsilon^\times_\varpi(G)$ lie over the same corner~$c_{pq}$.
\end{remark}

The key observation is that all the Ising model observables of the form~\eqref{eq:chi:=mu_sigma_def}, considered as functions of~$c\in\Upsilon^\times_\varpi(G)$, satisfy the propagation equation~\eqref{eq:propagation_on_corners} (e.g., see see~\cite[Section~3.5]{chelkak-cimasoni-kassel}). In the recent research, this equation was mostly used in the context of isoradial graphs, in which case the parameter~$\theta_e$ has also a direct geometric meaning, but in fact~\eqref{eq:propagation_on_corners} is fairly abstract. In particular, Definition~\ref{def:s-hol} does \emph{not} rely upon a particular choice (up to a homotopy) of an embedding of~$\diamondsuit(G)$ into~$\mathbb{C}$. Contrary to~\eqref{eq:propagation_on_corners}, which was known for decades, the next definition first appeared in the work of Smirnov~\cite{smirnov-06icm,smirnov-conf-inv-I} on the critical Ising model on~$\mathbb{Z}^2$.

\begin{definition}
\label{def:H_F}
Let~$F$ be an s-holomorphic spinor on~$\Upsilon^\times_\varpi(G)$. Then one can define a function~$H_F$ on~$\Lambda(G)$ by specifying its increment around each quad from~$\diamondsuit(G)$ as
\begin{equation}
\label{eq:def_of_H}
H_F(v_p^\bullet)-H_F(v_q^\circ)~:=~(F(c^\sharp_{pq}))^2=(F(c^\flat_{pq}))^2
\end{equation}
Note that, independently of~$\varpi$, $H_F$ is defined on~$\Lambda(G)$ and not on its double-cover.
\end{definition}
\begin{remark}
Due to~\eqref{eq:propagation_on_corners}, one has~$(F(c_{00}))^2\!+(F(c_{11}))^2\!=(F(c_{01}))^2\!+(F(c_{10}))^2$. Thus, $H_F$ is locally (and hence globally in the simply connected setup) well-defined.
\end{remark}
A priori, the functions~$H_F$ do not seem to be natural objects for the study of correlations in the Ising model but they turned out to be absolutely indispensable for the analysis of scaling limits of such correlations in discrete approximations to general planar domains initiated in~\cite{smirnov-06icm,smirnov-conf-inv-I}, see Section~\ref{sub:boudnary_H} below.

\subsection{Pfaffian structure of fermionic correlators} Similarly to~\eqref{eq:chi:=mu_sigma_def}, one can consider expectations containing two or more formal variables~$\chi_c$. We start with the following observation: despite the fact that the quantities~\eqref{eq:corr_mu_sigma_def}, viewed as functions on~$G^{m,n}_{[\bullet,\circ]}$, are symmetric with respect to permutations of~$v_p$, as well as to those of~$u_q$, an additional sign change appears if one exchanges~$(u(c)v(c))$ and~$(u(d)v(d))$; this can be viewed as a cumulative result of a `half-turn' of~$u(c)$ around~$v(d)$ and a `half-turn' of~$u(d)$ around~$v(c)$. In other words, the variables~$\chi_c$ and~$\chi_d$ \emph{anti-commute}:
\begin{equation}
\langle\chi_d\chi_c\mu_{v_1}...\mu_{v_{m-2}}\sigma_{u_1}...\sigma_{u_{n-2}}\rangle= -\langle\chi_c\chi_d\mu_{v_1}...\mu_{v_{m-2}}\sigma_{u_1}...\sigma_{u_{n-2}}\rangle
\end{equation}
if one considers both sides as a function of~$(c,d)\in (\Upsilon^\times_\varpi(G))^{[2]}$, where $(\Upsilon^\times_\varpi(G))^{[2]}$ denotes the set $(\Upsilon^\times_\varpi(G))^2\setminus\{(c,d):c^{\sharp,\flat}= d^{\sharp,\flat}~\text{or}~c^{\sharp,\flat}=d^{\flat,\sharp}\}$. More generally, given a collection of vertices and faces~$\varpi=\{v_1,...,v_{m-k},u_1,...,u_{n-k}\}$, the quantities
\begin{equation}
\label{eq:many_chi}
\langle\chi_{c_1}...\chi_{c_k}\mathcal{O}_\varpi[\mu,\sigma]\,\rangle~:=~
\langle\chi_{c_1}...\chi_{c_k}\mu_{v_1}...\mu_{v_{m-k}}\sigma_{u_1}...\sigma_{u_{n-k}}\rangle
\end{equation}
are anti-symmetric functions on~$(\Upsilon^\times_\varpi(G))^{[k]}$; see~\cite{chelkak-hongler-izyurov-mixed} for more precise definitions.

Another striking observation, which is also well known in the folklore for decades, is that the correlations~\eqref{eq:many_chi}
satisfy the Pfaffian identities: for an even number of pairwise distinct~$c_1,...,c_k$ corners of~$G$, one has
\begin{align}
\langle\chi_{c_1}...\chi_{c_k}\mathcal{O}_\varpi[\mu,\sigma]\,\rangle \cdot \langle\mathcal{O}_\varpi[\mu,\sigma]\,\rangle^{k/2-1}
 = \label{eq:Pfaff_chi} ~\Pf\big[\,\langle\chi_{c_r}\chi_{c_s}\mathcal{O}_\varpi[\mu,\sigma]\,\rangle\,\big]{}_{r,s=1}^k\,,
\end{align}
where the diagonal entries of the matrix on the right-hand side are set to~$0$.

One of the most transparent explanations of this Pfaffian structure comes from a remarkable fact that the partition function~$\mathcal{Z}(G)$ of the Ising model can be also written as the Pfaffian of some (real, anti-symmetric) matrix~$\widehat{\mathrm{K}}$, which is a simple transform of the famous \emph{Kac--Ward matrix}~$\mathcal{KW}(G,x):=\mathrm{Id}-\mathrm{T}$, where
\begin{equation}
\label{eq:kac-ward}
\mathrm{T}_{e,e'}=\begin{cases}\exp[\frac{i}{2}(\arg(e')-\arg(e))]\cdot(x_ex_{e'})^{1/2}& \text{if~$e'$~prolongates~$e$,} \\
0 & \text{otherwise,}\end{cases}
\end{equation}
and~$e,e'$ are~\emph{oriented} edges of~$G$. Namely (see~\cite{chelkak-cimasoni-kassel} for more details, including the interpretation via a relevant dimer model on the so-called Fisher graph~$G^\mathrm{F}$, and~\cite{lis-kac-ward} for a streamlined version of classical arguments),
\[
\widehat{\mathrm{K}}=i\mathrm{U}^* \mathrm{J}\cdot \mathcal{KW}(G,x)\cdot \mathrm{U},\quad \text{where}\quad \mathrm{U}=
\mathrm{diag}(\eta_e),\ \ \textstyle \eta_e:=\varsigma\cdot \exp[-\frac{i}{2}\arg(e)],
\]
and~$\mathrm{J}$ is composed of~$2\times 2$ blocks~$\mathrm{J}_{e,e}=\mathrm{J}_{\bar{e},\bar{e}}=0$, $\mathrm{J}_{e,\bar{e}}=\mathrm{J}_{\bar{e},e}=1$, indexed by pairs~$(e,\bar{e})$ of oppositely oriented edges of~$G$. Note the similarity between the definition of~$\eta_e$ and~\eqref{eq:Dirac_spinor}: essentially, one can view the former as an arbitrarily chosen section of the latter, considered on oriented edges of~$G$ instead of~$\Upsilon(G)$.

In other words, the Hamiltonian of the Ising model can be rewritten as a {quadratic} form in \emph{Grassmann variables}~$\phi_e$ (aka free \emph{fermions}), whose (formal) correlators satisfy Pfaffain identities by definition. Then one can check (e.g., see~\cite[Theorem~1.2]{chelkak-cimasoni-kassel}) that these correlators of~$\phi_e$ admit essentially the same combinatorial expansions as the expectations~\eqref{eq:many_chi}. In fact, if all the vertices~$v(c_r)$ and~$v_p$ are pairwise distinct, then all the expectations involved in~\eqref{eq:Pfaff_chi} can be viewed~as (formal) correlators of some Grassmann variables~$\chi_c$ obtained from~$\phi_e$ by a local (namely, block diagonal with blocks indexed by vertices of~$G$) linear change, see~\cite[Section~3.4]{chelkak-cimasoni-kassel}. Therefore, the Pfaffian identities~\eqref{eq:Pfaff_chi} hold true if all~$v(c_r)$ and~$v_p$ are pairwise distinct and this assumption can be removed using the propagation equation~\eqref{eq:propagation_on_corners}, which is satisfied (with respect to each of~$c_r$) by both sides of~\eqref{eq:Pfaff_chi}.

\section{Holomorphic observables in the critical model on isoradial graphs}
\label{sec:hol_obs}

\setcounter{equation}{0}

\subsection{Critical Ising model on isoradial graphs (rhombic lattices)} We now focus on the case when a weighted graph~$G$ is a part of a nice infinite grid, for instance a part of~$\mathbb{Z}^2$. For the homogeneous (i.e., all~$J_e=1$) model on~$\mathbb{Z}^2$, the Kramers--Wannier duality~\eqref{eq:theta_e_def} suggests that the value~$x=\tan\frac{\pi}{8}=\sqrt{2}-1$ corresponds to the critical point of the model and indeed a second order phase transition at~$\beta=-\frac{1}{2}\log(\sqrt{2}-1)$ can be justified in several ways. More generally, one can consider an arbitrary infinite tiling~$\diamondsuit$ of the complex plane by \emph{rhombi} with angles uniformly bounded from below, split the vertices of the bipartite graph~$\Lambda$ formed by the vertices of these rhombi into two \emph{isoradial graphs}~$\Gamma^\bullet$ and~$\Gamma^\circ$, and define the Ising model weights on~$\Gamma^\bullet$ by setting~$x_e:=\tan{\frac{1}{2}\theta_e}$, where~$\theta_e$ is the half-angle of the corresponding rhombus at vertices from~$\Gamma^\bullet$. This model is called a \emph{self-dual Z-invariant Ising model} on isoradial graphs and it can be viewed as a particular point in a family of the so-called Z-invariant Ising models studied by Baxter and parameterized by an elliptic parameter~$k$. Recently, it was shown by Boutillier, de Tili\`ere and Raschel that, similarly to the case of regular lattices, the Z-invariant Ising model on a given isoradial graph exhibits a second order phase transition at its self-dual point~$k=0$, see~\cite{BdTR-ising-16} and references therein for more details.

Below we are mostly interested in the following setup: let~$\Omega\subset\mathbb{C}$ be a bounded simply connected domain and let~$\Omega^\delta=\Omega(G^\delta)$ be a sequence of its polygonal discretizations on isoradial graphs of mesh size~$\delta$ (in other words, the corresponding rhombic lattice is formed by rhombi with side lengths~$\delta$) with~$\delta\to 0$. Below we use the notation~$V^\bullet(\Omega^\delta)$, $V^\circ(\Omega^\delta)$, $V^\Lambda(\Omega^\delta)$, $V^\diamond(\Omega^\delta)$, $V^\curlyvee(\Omega^\delta)$ and~$V^\curlyvee_\varpi(\Omega^\delta)$ for the sets of vertices of~$(G^\delta)^\bullet$, $(G^\delta)^\circ$, $\Lambda(G^\delta)$, $\diamondsuit(G^\delta)$, $\Upsilon(G^\delta)$ and~$\Upsilon_\varpi(G^\delta)$, respectively, and we always assume that the Ising model weighs~$x_e=\tan\frac{1}{2}\theta_e$ are chosen according to the geometry of these isoradial graphs, as explained above. In particular, the reader can think of a sequence of discrete domains~$\Omega^\delta$ drawn on square grids of mesh sizes~$\delta\to 0$ and the homogeneous Ising model with the critical weight~$x=\sqrt{2}\!-\!1$.

\subsection{From Kadanoff--Ceva fermions to discrete holomorphic functions} As already mentioned in Section~\ref{sub:s-hol}, all the Ising model observables of the form
\[
\langle\psi_c\mathcal{O}_\varpi[\mu,\sigma ]\,\rangle := \eta_c\langle\chi_c\mathcal{O}_\varpi[\mu,\sigma ]\,\rangle,\quad \varpi=\{v_1,...,v_{m-1},u_1,...,u_{m-1}\},
\]
considered as functions of~$c\in V^\curlyvee_\varpi(\Omega^\delta)$, always satisfy a three-term propagation equation, which is obtained from~\eqref{eq:propagation_on_corners} by multiplying each of the terms by the Dirac spinor~\eqref{eq:Dirac_spinor}. For the critical Ising model on isoradial graphs, it admits a particularly nice interpretation as the \emph{discrete holomorphicity} of~$F_\varpi(c)$. One has
\[
\textstyle \eta_c=\varsigma\cdot\exp[-\frac{i}{2}\arg(v^\bullet(c)-v^\circ(c))]= \varsigma\delta^{1/2}\cdot(v^\bullet(c)-v^\circ(c))^{-1/2},
\]
and a straightforward computation shows that~\eqref{eq:propagation_on_corners} can be rewritten as
\begin{equation}
\label{eq:psi+psi=psi+psi}
\langle\psi_{c_{00}}\mathcal{O}_\varpi[\mu,\sigma ]\,\rangle+\langle\psi_{c_{11}}\mathcal{O}_\varpi[\mu,\sigma ]\,\rangle
=\langle\psi_{c_{01}}\mathcal{O}_\varpi[\mu,\sigma ]\,\rangle+\langle\psi_{c_{10}}\mathcal{O}_\varpi[\mu,\sigma ]\,\rangle\,.
\end{equation}
Taking the complex conjugate and using~$(v^\bullet(c)\!-\!v^\circ(c))\cdot \eta_c = \varsigma^2\delta\cdot \overline{\eta}_c$, one gets
\[
(v^\circ_0-v^\bullet_0)\Phi(c_{00})+ (v^\bullet_1-v^\circ_0)\Phi(c_{10})+(v^\circ_1-v^\bullet_1)\Phi(c_{11})+ (v^\bullet_0-v^\circ_1)\Phi(c_{01})=0,
\]
where~$\Phi(c):=\langle\psi_c\mathcal{O}_\varpi[\mu,\sigma ]\,\rangle$, which can be thought of as a vanishing discrete contour integral around a given rhombus. However, there exists an even better way to interpret~\eqref{eq:psi+psi=psi+psi}, adopted in~\cite{smirnov-06icm,smirnov-conf-inv-I} and further works. For~$z\in V^\diamond_\varpi(\Omega^\delta)$, denote
\begin{equation}
\label{eq:def_complex_F}
F_\varpi(z)~:=~
\langle(\psi_{c_{00}(z)}\!+\psi_{c_{11}(z)})\mathcal{O}_\varpi[\mu,\sigma]\,\rangle = \langle(\psi_{c_{01}(z)}\!+\psi_{c_{10}(z)})\mathcal{O}_\varpi[\mu,\sigma]\,\rangle.
\end{equation}
It turns out that there exists a natural discrete Cauchy--Riemann operator~$\partial_\Lambda^*$ (acting on functions defined on~$V^\diamond(\Omega^\delta)$ and returning functions on~$V^\Lambda(\Omega^\delta)$) such that
\begin{equation}
\label{eq:dF=0}
\partial_\Lambda^* F_\varpi=0\quad \text{in}\ \ V^\Lambda(\Omega^\delta)\setminus\{v_1,...,v_{m-1},u_1,...,u_{n-1}\}\,.
\end{equation}
More precisely,~$\partial_\Lambda^*$ is the (formally) adjoint operator to
\begin{equation}
\label{eq:def_d_iso}
[\partial_\Lambda H](z):=\frac{1}{2}\biggl[\frac{H(v^\bullet_1(z))\!-\!H(v^\bullet_0(z))}{v^\bullet_1(z)-v^\bullet_0(z)}+ \frac{H(v^\circ_1(z))\!-\!H(v^\circ_0(z))}{v^\circ_1(z)-v^\circ_0(z)}\biggr],
\end{equation}
see~\cite{mercat-CMP,kenyon-inv-02,chelkak-smirnov-adv,chelkak-smirnov} for more details. One of the advantages of~\eqref{eq:dF=0} is that we now have roughly the same number of equations as the number of unknowns~$F(z)$ and do not need to keep track of the fact that the values~$\Phi(c_{pq})$ discussed above have prescribed complex phases (note that the number of unknowns~$\Phi(c)$ is roughly twice the number of vanishing elementary contour integrals around rhombi). However, this information on complex phases is not fully encoded by~\eqref{eq:dF=0}. In fact, a slightly stronger condition holds: for two rhombi~$z,z'$ adjacent to the same~$c\in V^\curlyvee_\varpi(\Omega^\delta)$,
\begin{equation}
\label{eq:s-hol-on-z}
\mathrm{Pr}[F_\varpi(z);\eta_c\mathbb{R}]~=~\mathrm{Pr}[F_\varpi(z');\eta_c\mathbb{R}],
\end{equation}
where~$\mathrm{Pr}[F;\eta\mathbb{R}]:=\overline{\eta}^{\,-1}\mathrm{Re}[\overline{\eta}F]$.
Actually, one can easily see from~\eqref{eq:psi+psi=psi+psi} that both sides of~\eqref{eq:s-hol-on-z} are equal to~$\langle\psi_c\mathcal{O}_\varpi[\mu,\sigma]\,\rangle$ and it is not hard to check that the condition~\eqref{eq:s-hol-on-z} indeed implies~\eqref{eq:dF=0}, see~\cite[Section~3.2]{chelkak-smirnov}.
\begin{remark}
\label{rem:s-hol-fct-spinors}
In~\cite{chelkak-smirnov}, the term \emph{s-holomorpicity} was introduced for \emph{complex-valued functions}~$F(z)$ defined on $V^\diamond(\Omega^\delta)$ and satisfying~\eqref{eq:s-hol-on-z}, in particular to indicate that such functions also satisfy~\eqref{eq:dF=0}. In view of~\eqref{eq:def_complex_F}, this is essentially the same notion as the one introduced in Definition~\ref{def:s-hol} for \emph{real-valued spinors} defined on~$V^\times(\Omega^\delta)$. Still, it is worth mentioning that~\eqref{eq:propagation_on_corners} does not rely upon the very specific (isoradial) choice of the embedding of~$G^\delta$ and the Ising weights, while~\eqref{eq:dF=0} does; see also~\cite[Sections~3.5,3.6]{chelkak-cimasoni-kassel} for a discussion of~\eqref{eq:s-hol-on-z} in the general case.
\end{remark}

\begin{remark}
\label{rem:comb-def-parafermions}
In breakthrough works of Smirnov~\cite{smirnov-06icm,smirnov-conf-inv-I,duminil-smirnov-clay,smirnov-10icm} (see also~\cite{chelkak-smirnov,hongler-smirnov}) on the critical Ising model, discrete holomorphic \emph{fermions} were introduced in a purely combinatorial way, as a particular case of \emph{parafermionic} observables, and the \mbox{s-holo}\-morphicity condition~\eqref{eq:s-hol-on-z} was verified combinatorially as well (e.g., see~\cite[Fig.~5]{smirnov-10icm} or~\cite[Fig.~5]{chelkak-smirnov}). Following this route, more general spinor observables~\eqref{eq:def_complex_F} were also treated combinatorially, and not via the Kadanoff--Ceva formalism, in~\cite{chelkak-izyurov,chelkak-hongler-izyurov}.
\end{remark}

\subsection{Boundary conditions and the key role of the function~$\bm{H_F}$}
\label{sub:boudnary_H}

We now come back to Definition~\ref{def:H_F} suggested in~\cite{smirnov-06icm,smirnov-conf-inv-I} as a crucial tool for the analysis of fermionic observables via boundary value problems for s-holomorphic functions in~$\Omega^\delta$. Let~$F=F^\delta$ be such a function and~$H_F$ be defined via~\eqref{eq:def_of_H} from the corresponding (see Remark~\ref{rem:s-hol-fct-spinors}) real-valued s-holomorphic spinor on~$V^\times(\Omega^\delta)$. A straightforward computation shows that, for each~$z\in V^\diamond(\Omega^\delta)$, one has
\begin{equation}
\label{eq:H-def-as-Im}
\begin{array}{l}
H_F(v^\bullet_1)\!-\!H_F(v^\bullet_0)=\mathrm{Re}[\,\overline{\varsigma}^2\delta^{-1}(F(z))^2(v^\bullet_1\!-\!v^\bullet_0)]= \mathrm{Im}[\delta^{-1}(F(z))^2(v^\bullet_1\!-\!v^\bullet_0)],\\
H_F(v^\circ_1)\!-\!H_F(v^\circ_0)=\mathrm{Re}[\,\overline{\varsigma}^2\delta^{-1}(F(z))^2(v^\circ_1\!-\!v^\circ_0)]= \mathrm{Im}[\delta^{-1}(F(z))^2(v^\circ_1\!-\!v^\circ_0)],
\end{array}
\end{equation}
provided that the global prefactor in the definition~\eqref{eq:Dirac_spinor} is chosen\begin{footnote}{This is a matter of convenience. E.g., the notation in~\cite{hongler-smirnov,hongler-kytola,chelkak-hongler-izyurov,gheissari-hongler-park} corresponds to~$\varsigma=i$.}\end{footnote} as~$\varsigma=e^{i\frac{\pi}{4}}$. In other words, the real-valued function~$H_F$ turns out to be a proper discrete analogue of the expression~$\int\mathrm{Im}[(F(z))^2dz]$. Even more importantly, $H_F$ encodes the boundary conditions of the Ising model in the form of Dirichlet boundary conditions. Namely, one can choose an additive constant in its definition so that
\begin{equation}
\label{eq:H_bdry_cond}
\begin{array}{l}
\bullet~\text{$H_F=0$ and~$\partial_{\nu_{\mathrm{in}}}H\ge 0$ on all the wired boundary arcs,}\\
\bullet~\text{$H_F$ is constant and~$\partial_{\nu_\mathrm{in}}H\le 0$ on each of the free boundary arcs,}
\end{array}
\end{equation}
where~$\partial_{\nu_\mathrm{in}}$ stands for a discrete analogue of the inner normal derivative, see~\cite{chelkak-smirnov,izyurov-free}. In particular, the values of the s-holomorphic spinor~$F$ at two endpoints of a free arc have the same absolute value (which, according to~\eqref{eq:def_of_H}, is equal to the square root of the value of~$H$ on this free arc) and in fact one can also match their signs by tracking the Dirac spinor~\eqref{eq:Dirac_spinor} along the boundary; see~\cite{izyurov-free,chelkak-hongler-izyurov-mixed} for more details.

One can also check that, for~$\varpi=\{v_1,...,v_{m-1},u_1,...,u_{n-1}\}$ and~$H_{F_\varpi}$ obtained from the s-holomorphic function~$F_\varpi$ given by~\eqref{eq:def_complex_F} (or, equivalently, from the \mbox{s-ho}lo\-morphic spinor~$\langle\chi_c\mathcal{O}_\varpi[\mu,\sigma]\rangle$), the following is fulfilled:
\begin{itemize}
\item the minimum of~$H_F$ in a vicinity of each of~$v_p$ is attained at the boundary,
\item the maximum of~$H_F$ in a vicinity of each of~$u_q$ is attained at the boundary.
\end{itemize}

Imagine now that instead of a sequence of discrete s-holomorphic functions~$F^\delta$ on~$\Omega^\delta$ we had a continuous holomorphic function~$f:\Omega\to\mathbb{C}$ such that the harmonic function~$h_f:=\int \mathrm{Im}[(f(z))^2dz]$ satisfied the conditions listed above. We could then hope to identify~$f$ as a solution to such a boundary value problem in~$\Omega$, provided that this solution is unique (up to a normalization to be fixed). This not necessarily true in full generality (e.g., some degeneracy might appear in presence of several spins and several disorders in~$\mathcal{O}_\varpi[\mu,\sigma]$) but there are enough situations in which a relevant uniqueness theorem `in continuum' is fulfilled and thus one can hope to prove the convergence of~$F^\delta$ to~$f$ as~$\delta\to 0$; see Section~\ref{sub:conv_s-hol} and~\cite{chelkak-hongler-izyurov-mixed} for more details.

\subsection{Smirnov's sub-/super-harmonicity} The interpretation of \mbox{s-holo}\-morphic functions~$F^\delta$ as solutions to discrete boundary value problems described above is implicitly based on the idea that one can think of~$H_{F^\delta}$ as of a discrete harmonic function on~$\Lambda(\Omega^\delta)$. However, this cannot be true literally: the functions~$(F^\delta)^2$ are not discrete holomorphic, thus there is no hope that~$H_{F^\delta}$ are exactly discrete harmonic. Nevertheless, there is a miraculous positivity phenomenon, first observed in~\cite{smirnov-06icm,smirnov-conf-inv-I} on the square grid and later generalized to the isoradial setup in~\cite{chelkak-smirnov}.
\begin{lemma}[see {\cite[Lemma~3.8]{smirnov-conf-inv-I}}, {\cite[Proposition~3.6(iii)]{chelkak-smirnov}}]
\label{lemma:positivity}
Let~$F$ be defined on quads surrounding a given vertex~$v\in V^\Lambda(\Omega^\delta)$ and satisfy the \mbox{s-holo}\-morphicity identities~\eqref{eq:s-hol-on-z}. If~$H_F$ is constructed via~\eqref{eq:H-def-as-Im} (or, equivalently, via~\eqref{eq:def_of_H}), then
\begin{equation}
\label{eq:positivity_iso}
[\Delta^\bullet H_F](v)\ge 0\ \ \text{if}\ \ v\in V^\bullet(\Omega^\delta),\quad\text{and}\quad [\Delta^\circ H_F](v)\le 0\ \ \text{if}\ \ v\in V^\circ(\Omega^\delta),
\end{equation}
where the Laplacian operator~$\Delta^\bullet$ (and, similarly,~$\Delta^\circ$) is defined as
\begin{equation}
\label{eq:def_Delta_iso}
\textstyle [\Delta^\bullet H](v)~:=~\sum_{v_1:v_1\sim v} \tan\theta_{(vv_1)}\cdot(H(v_1)\!-\!H(v)).
\end{equation}
\end{lemma}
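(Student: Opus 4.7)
The claim is local around $v$, so I enumerate the quads $z_1,\dots,z_d\in\diamondsuit(\Omega^\delta)$ incident to $v$ in cyclic order, with opposite primal vertices $v_j\in V^\bullet$, shared circle vertices $u_j\in V^\circ$ (so that $u_{j-1},u_j$ are the two $V^\circ$-vertices of $z_j$), and half-angles $\theta_j$ at $v$ obeying $\sum_j\theta_j=\pi$. Write $\hat c_j:=(vu_j)$ for the corner shared by $z_j$ and $z_{j+1}$ and set $f_j:=F(\hat c_j)\in\mathbb{R}$. Because $\Upsilon^\times_\varpi$ branches over the face of $\Upsilon$ corresponding to $v$, these corner values must obey the anti-periodic spinor constraint $f_d=-f_0$. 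The plan is to reduce $[\Delta^\bullet H_F](v)$ to an explicit quadratic form in the $f_j$ using the propagation equation~\eqref{eq:propagation_on_corners} and~\eqref{eq:def_of_H}, and then to verify its positivity.

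Applying~\eqref{eq:propagation_on_corners} inside $z_j$ expresses the value of $F$ at the corner $(v_ju_{j-1})$ as $g_j=(f_{j-1}-f_j\cos\theta_j)/\sin\theta_j$; combined with~\eqref{eq:def_of_H} this yields
\[
H_F(v_j)-H_F(v)\;=\;g_j^2-f_{j-1}^2\;=\;\frac{\cos\theta_j\bigl[\cos\theta_j(f_{j-1}^2+f_j^2)-2f_{j-1}f_j\bigr]}{\sin^2\theta_j}.
\]
Weighting by $\tan\theta_j$, summing over $j$, and invoking the elementary identity $\cos\theta(a^2+b^2)-2ab=\cos^2(\theta/2)(a-b)^2-\sin^2(\theta/2)(a+b)^2$ (together with $\sin\theta=2\sin(\theta/2)\cos(\theta/2)$), I arrive at the closed form
\[
[\Delta^\bullet H_F](v)\;=\;\tfrac{1}{2}\sum_{j=1}^d\Bigl[\cot(\theta_j/2)(f_{j-1}-f_j)^2\;-\;\tan(\theta_j/2)(f_{j-1}+f_j)^2\Bigr].
\]

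The main obstacle is now to show this expression is $\ge 0$: the two sums carry opposite signs, so the inequality is not term-by-term, and both the angle-sum constraint $\sum_j\theta_j=\pi$ and the anti-periodicity $f_d=-f_0$ are essential. The key observation is that the substitution $\phi_j:=f_j\exp\!\bigl(i\sum_{k\le j}\theta_k\bigr)$ turns the anti-periodic real sequence $\{f_j\}$ into a \emph{periodic} complex sequence on $\mathbb{Z}/d\mathbb{Z}$ (the identity $\sum_j\theta_j=\pi$ is precisely what cancels the sign). In this basis the quadratic form admits a sum-of-squares rearrangement, which in the regular-angle case reduces to a one-line discrete Fourier diagonalization with all eigenvalues manifestly $\ge 0$, and in the general isoradial case is carried out in~\cite[Proposition~3.6(iii)]{chelkak-smirnov}. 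Finally, the $V^\circ$ case is completely dual: replacing $\theta_e$ everywhere by the complementary half-angle $\theta_e^\bullet=\tfrac{\pi}{2}-\theta_e$ swaps $\sin\leftrightarrow\cos$, hence $\tan\leftrightarrow\cot$, which reverses the overall sign of the quadratic form and gives $[\Delta^\circ H_F](v)\le 0$.
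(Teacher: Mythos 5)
The paper states this lemma without proof, deferring entirely to \cite[Lemma~3.8]{smirnov-conf-inv-I} and \cite[Proposition~3.6(iii)]{chelkak-smirnov}, and your argument follows exactly the route of those references: reduce $[\Delta^\bullet H_F](v)$ to an explicit quadratic form in the corner values $f_j$ and check its nonnegativity. Your algebra is correct (I verified the identity $\cos\theta\,(a^2+b^2)-2ab=\cos^2(\theta/2)(a-b)^2-\sin^2(\theta/2)(a+b)^2$ and the resulting closed form), and the form you obtain is precisely the form $Q^{(n)}_{\phi_1;\dots;\phi_n}$ of \cite[p.~543]{chelkak-smirnov} that this paper reuses in its Section~5 generalization. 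Two caveats. First, the only genuinely hard step --- nonnegativity of this form for \emph{arbitrary} angles with $\sum_j\theta_j=\pi$ under the antiperiodicity $f_d=-f_0$ --- remains outsourced in your write-up: the complexification $\phi_j=f_j\exp(i\sum_{k\le j}\theta_k)$ and Fourier diagonalization settle only the equiangular case, while the general case in \cite{chelkak-smirnov} is done by an inductive ``brute force'' merging of angles, so your proof is no more self-contained than the paper's citation at exactly the point where the content lies. Second, the $V^\circ$ case is dispatched too quickly: the overall sign reversal comes from the orientation of~\eqref{eq:def_of_H} (the increment $H_F(v^\bullet)-H_F(v^\circ)=F(c)^2$ is signed from $\circ$ to $\bullet$) combined with the dual half-angles $\tfrac{\pi}{2}-\theta_e$ (which again sum to $\pi$ around a $\circ$-vertex), not merely from the swap $\tan\leftrightarrow\cot$ in the final formula; one additional line would make this precise.
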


It is easy to check that both discrete operators~$\Delta^\bullet$ and~$\Delta^\circ$ approximate the standard Laplacian as~$\delta\to 0$ in a quite strong (local) sense; see~\cite{chelkak-smirnov-adv} for details. Recall that, according to Definition~\ref{def:H_F} the discrete \emph{subharmonic} function~$H_{F^\delta}|_{V^\bullet(\Omega^\delta)}$ is pointwise (i.e., at any pair of neighboring vertices~$v^\bullet(c)$ and~$v^\circ(c)$) greater or equal to the discrete \emph{superharmonic} function~$H_{F^\delta}|_{V^\circ(\Omega^\delta)}$. Therefore, provided that the difference~$\delta H_{F^\delta}(v^\bullet(c)-\delta H_{F^\delta}(v^\circ(c))=(F^\delta(c))^2$ is small inside of~$\Omega^\delta$, both~$\delta H_{F^\delta}|_{V^\bullet(\Omega^\delta)}$ and~$\delta H_{F^\delta}|_{V^\circ(\Omega^\delta)}$ should have the same \emph{harmonic} limit as~$\delta\to 0$.

\begin{remark} Still, there is a question of how to show that~$F^\delta$ is small. In the pioneering work~\cite{smirnov-06icm} devoted to basic fermionic observables on~$\mathbb{Z}^2$, this fact was derived from monotonicity arguments and the magnetization estimate; see~\cite[Lemma~A.1]{smirnov-conf-inv-I}. Shortly afterwards, an \emph{a priori} regularity theory for functions~$H_F$ were developed in~\cite[Section~3]{chelkak-smirnov}. These a priori estimates were later applied to various Ising-model observables, in particular when no simple monotonicity arguments are available.
\end{remark}

\begin{remark}
\label{rem:no-go-positivity}
It is commonly believed that one cannot \emph{directly} generalize Lemma~\ref{lemma:positivity} for more general graphs or Ising-model weights~$x_e=\frac{1}{2}\tan\theta_e$: the existence of \emph{some} Laplacians~$\Delta^\bullet$ and~$\Delta^\circ$ such that~\eqref{eq:positivity_iso} holds true seems to be equivalent to the conditions~$\sum_{v_1:v_1\sim v}\arctan x_{(vv_1)}\in \frac{\pi}{2}+\pi\mathbb{Z}$, which lead to an isoradial embedding of~$(G^\bullet,G^\circ)$, possibly with~$(2\pi+4\pi\mathbb{Z})$ conical singularities at vertices.
\end{remark}

\section{Convergence of correlations}
\label{sec:conv_corr}
\setcounter{equation}{0}

In this section we very briefly discuss the convergence results for correlation functions (fermions, energy densities, spins etc) obtained during the last decade for the critical model on~$\mathbb{Z}^2$ (see also Remark~\ref{rem:limits-on-iso} on the critical Z-invariant model). We refer the interested reader to upcoming~\cite{chelkak-hongler-izyurov-mixed}, see also~\cite[Section~4]{chelkak-16ecm}.

\subsection{Convergence of s-holomorphic observables} \label{sub:conv_s-hol} As already mentioned above, techniques developed in~\cite{chelkak-smirnov} essentially allow one to think of sub-/super-harmonic functions $H^\delta:=\delta^{-1}H_{F^\delta}$ as of harmonic ones. In particular, a uniform boundedness of the family~$H^\delta$ on an open set implies that both~$H^\delta$ and the original observables~$F^\delta$ are equicontinuous on compact subsets of this open set.

Imagine now that we want to prove the convergence of (properly normalized) observables~$F^\delta_\varpi$,~$\varpi=\{v_1,...,v_{m-1},u_1,...,u_{n-1}\}$ as~$\delta\to 0$. Assuming that the corresponding functions~$H^\delta_\varpi$ are uniformly bounded away from~$v_p$'s and~$u_q$'s, the Arzel\`a--Ascolli theorem ensures that, \emph{inside of}~$\Omega\setminus\varpi$, at least subsequential limits~$H^\delta_\varpi\to h_\varpi$ and $F^\delta_\varpi\to f_\varpi$ exist. Since the functions~$F^\delta_\varpi$ are discrete holomorphic, their limit~$f_\varpi$ is a holomorphic function and one also has~$h_\varpi=\int\mathrm{Im}[(f_\varpi(z))^2dz]$. Moreover, one can show that the \emph{boundary conditions}~\eqref{eq:H_bdry_cond} survive as~$\delta\to 0$, see~\cite[Remark~6.3]{chelkak-smirnov} and~\cite{izyurov-free}. Clearly,~$h_\varpi$ also inherits from~$H^\delta_\varpi$ the semi-boundedness from below near~$v_p$ and from above near~$u_q$. Thus, only two questions remain:

(i)\phantom{i} to show that~$f$ and~$h$ are uniquely characterized by the above properties;

(ii) to justify that the functions~$\delta^{-1}H_\delta$ are uniformly bounded away from~$\varpi$.

\noindent In general (i.e., in presence of several disorders and spins in~$\mathcal{O}_\varpi[\mu,\sigma]$), the uniqueness (i) may fail. Fortunately, there exists a principal setup in which it holds true:
\begin{equation}
\label{eq:basic-F}
F^\delta_\varpi(z):=\frac{\delta^{-1}\langle\psi(z)\,\chi_{d}\,\sigma_{u_1}...\sigma_{u_{n-2}}\rangle} {\langle\sigma_{u_1}...\sigma_{u_{n-2}}\rangle}\,,\quad \psi(z):=\psi_{c_{00}(z)}\!+\!\psi_{c_{11}(z)}\,,
\end{equation}
where~$d$ and~$u_1,...,u_{n-2}$ are assumed to approximate distinct points of~$\Omega$ as~$\delta\to 0$.
(Recall also that the roles of spins and disorders are not fully symmetric since our standard boundary conditions are not fully symmetric under the Kramers--Wannier duality). Note that such functions~$F^\delta_\varpi$ have `standard' discrete singularities at~$d$, leading to a `standard' singularity (simple pole with a fixed residue) of~$f_\varpi$ at~$d$.

Finally, a useful trick allowing to deduce (ii) from (i) was suggested in~\cite[Section~3.4]{chelkak-hongler-izyurov}: if the functions~$H^\delta_\varpi$ were unbounded, then one could renormalize them in order to obtain non-trivial limits~$\widetilde{f}_\varpi$ and $\widetilde{h}_\varpi=\int\mathrm{Im}[(\widetilde{f}_\varpi(z))^2dz]$, which would solve the same boundary value problem as~$f_\varpi$ and~$h_\varpi$ but without the pole at~$d$, killed by such an additional renormalization. Due to (i), this boundary value problem has no nontrivial solution, which gives a contradiction; see also~\cite{chelkak-hongler-izyurov-mixed}.

\subsection{Fusion: from s-holomorphic observables to~$\bm{\varepsilon}$,~$\bm{\sigma}$ and~$\bm{\mu}$}
Once the convergence of observables~\eqref{eq:basic-F} is established, one can first use the Pfaffian structure~\eqref{eq:Pfaff_chi} of fermionic correlators to obtain the convergence of the quantities
\begin{equation}
\label{eq:chi-chi-sigma-sigma}
\frac{\langle\chi_{c_1}...\chi_{c_k}\sigma_{u_1}...\sigma_{u_{n-k}}\rangle}{\langle\sigma_{u_1}...\sigma_{u_{n-k}}\rangle}~=~ \mathrm{Pf}\biggl[\frac{\langle\chi_{c_r}\chi_{c_s}\sigma_{u_1}...\sigma_{u_{n-k}}\rangle} {\langle\sigma_{u_1}...\sigma_{u_{n-k}}\rangle}\biggr]_{s,r=1}^k\,.
\end{equation}
In particular, the simplest case~$n-k=0$ allows one to study~\cite{hongler-smirnov,hongler-thesis} the scaling limit of correlations of the \emph{energy density} field, defined for~$z\in V^\diamond(\Omega^\delta)$ as
\[
\varepsilon_z~:=~\sigma_{v^\circ_0(z)}\sigma_{v^\circ_1(z)}\!-\!2^{-\frac{1}{2}}\,=\,\pm\,2^{-\frac{1}{2}} \chi_{c_{00}(z)}\chi_{c_{11}(z)}\,=\, 2^{-\frac{1}{2}}\!-\!\mu_{v^\bullet_0(z)}\mu_{v^\bullet_1(z)}.
\]
Note that a careful analysis of the singularity of~\eqref{eq:basic-F} as~$z\to d$ is required in order to handle the next-to-leading term~$\varepsilon_z$ appearing in the fusion of two fermions~$\chi_c\chi_d$.

In the same spirit, recursively analyzing the behavior of~\eqref{eq:chi-chi-sigma-sigma} near discrete singularities as~$c_{k-s}\!\to u_{n-k-s}$, $s=0,...,r\!-\!1$, one obtains scaling limits of the ratios
\begin{equation}
\label{eq:chi-mu-sigma}
\langle\chi_{c_1}...\chi_{c_{k-r}}\mu_{u^\bullet{}_{\!\!\!n-k-r+1}}...\mu_{u^\bullet{}_{\!\!\!n-k}}\sigma_{u_1}...\sigma_{u_{n-k-r}}\rangle \cdot \langle\sigma_{u_1}...\sigma_{u_{n-k}}\rangle^{-1},
\end{equation}
where~$u^\bullet_p\in V^\bullet(\Omega^\delta)$ denotes one of the neighboring to~$u_p$ vertices of~$\Gamma^\bullet$. The remaining ingredient is the convergence of the denominators~$\langle\sigma_{u_1}...\sigma_{u_{n-k}}\rangle$, which do not contain any discrete holomorphic variable. Such correlations were treated in~\cite{chelkak-hongler-izyurov} using the following idea: fusing~$\chi_d$ and~$\sigma_{u_1}$ in~\eqref{eq:basic-F} one gets the observable
\[
\langle\psi(z)\mu_{u^\bullet_{1}}\sigma_{u_2}...\sigma_{u_{n-2}}\rangle \cdot \langle\sigma_{u_1}...\sigma_{u_{n-2}}\rangle^{-1}
\]
and then analyzes its behavior (as that of a discrete holomorphic function in~$z$) near the vertex~$u_1^\bullet\sim u_1$. This analysis provides an access to the ratio
\begin{equation}
\label{eq:d-multi-spin}
\langle\sigma_{u^\circ_1}\sigma_{u_2}...\sigma_{u_{n-2}}\rangle\cdot \langle\sigma_{u_1}...\sigma_{u_{n-2}}\rangle^{-1},\quad
\end{equation}
where~$u^\circ_1\sim u^\bullet_1\sim u^{\phantom{\circ}}_1$ can be any neighboring to~$u_1$ face of $\Omega^\delta$. The next-to-leading term in the asymptotics of this ratio as~$\delta\to 0$ encodes the discrete spatial derivative of~$\log\langle\sigma_{u_1}...\sigma_{u_{n-2}}\rangle$, which eventually allows one to identify its scaling limit by fusing spins to each other and thus reducing~$n$; see~\cite{chelkak-hongler-izyurov} and~\cite{chelkak-hongler-izyurov-mixed} for more details.

\begin{remark} The above scheme allows one to prove the convergence of arbitrary correlations of disorders, spins and fermions under \emph{standard} boundary conditions in~$\Omega^\delta$. In fact, it can be further generalized to include (a) the (common) spin~$\sigma_{u_\mathrm{out}}$ of the wired boundary arcs; (b) one or several disorders assigned to free boundary arcs; (c) fermions on the boundary of~$\Omega^\delta$. In particular, this allows one to handle an arbitrary mixture of `$+$', `$-$' and `free' boundary conditions; see~\cite{chelkak-hongler-izyurov-mixed} for details.
\end{remark}

\begin{remark}
\label{rem:limits-on-iso} The convergence results for s-holomorphic observables~\eqref{eq:basic-F} and the energy-density correlations can be also proved in the isoradial setup ad verbum. Nevertheless, the passage to~\eqref{eq:chi-mu-sigma} and, especially, the analysis of the spatial derivatives~\eqref{eq:d-multi-spin} of spin correlations require some additional work. We believe that all the key ingredients can be extracted from~\cite{dubedat-cauchy-riemann} but it is worth mentioning that such a generalization has not appeared yet even for the honeycomb/triangular grids.
\end{remark}

\subsection{More CFT on the lattice}
\label{sub:lattice-CFT}
From the Conformal Field Theory perspective (e.g., see~\cite{mussardo-book}), the scaling limits of correlations of fermions, spins, disorders and energy-densities are those of \emph{primary fields} (non-local ones in the case of~$\psi$ and~$\mu$). It is a subject of the ongoing research to construct the corresponding CFT, as fully as possible, directly on the lattice level (and not in the limit as~$\delta\to0$). Below we mention several results and research projects in this direction:

\smallskip

-- The analysis of general \emph{lattice fields} (i.e., functions of several neighboring spins) was started in~\cite{gheissari-hongler-park}, though at the moment only their leading terms, which converge to either~$1,\varepsilon,\sigma$ or the spatial derivative of~$\sigma$ in the scaling limit, are treated.

-- The action of the \emph{Virasoro algebra} on such lattice fields was recently defined in~\cite{hongler-kytola-viklund}, via the so-called Sugawara construction applied to discrete fermions.

-- An `infinitesimal non-planar deformation' approach to the \emph{stress-energy tensor} of the Ising model on faces of the honeycomb grid was recently suggested in~\cite{chelkak-glazman-smirnov}.

\smallskip

We also believe that one can use s-embeddings (see Section~\ref{sec:S-embeddings}) of weighted planar graphs~$(G,x)$ to properly interpret an infinitesimal change of the Ising weights~$x_e$ as a \emph{vector-field} in~$\mathbb{C}$, thus providing yet another approach to the stress-energy tensor.

\section{Interfaces and loop ensembles}
\setcounter{equation}{0}

In the 20th century, the (conjectural) conformal invariance of critical lattice models was typically understood via the scaling limits of correlation functions, which are fundamental objects studied by the classical CFT~\cite{mussardo-book}, though see~\cite{aizenman-burchard} and~\cite{langlands-et-al-ising}.
The introduction of SLEs by Schramm~\cite{schramm-00}, and later developments on CLEs (Conformal Loop Ensembles, see~\cite{sheffield,sheffield-werner-cle, miller-sheffield-werner-cle-percolation} and references therein) provided quite a different perspective of studying the \emph{geometry} of either particular interfaces (e.g., domain walls in the critical Ising model) -- conjecturally converging to SLEs -- or the full collection of interfaces -- conjecturally converging to CLEs -- as~$\Omega^\delta\!\to\Omega$.

For the critical Ising model, both in its classical and the so-called random-cluster (or Fortuin--Kastelen, see below) representations, the convergence of interfaces generated by Dobrushin boundary conditions was obtained already in the pioneering work of Smirnov (see~\cite[Theorem~1]{smirnov-06icm} and references therein) via the application of the so-called \emph{martingale principle} (e.g., see~\cite[Section~5.2]{smirnov-06icm}) and the convergence results for basic fermionic observables~\cite{smirnov-conf-inv-I,chelkak-smirnov}. However, the improvement of the \emph{topology of convergence} from that of functional parameters (aka driving forces) in the Loewner equation to the convergence of curves themselves required some additional efforts: the appropriate framework was provided by Kemppainen and Smirnov in~\cite{kemppainen-smirnov}; see also~\cite{5-authors} and references therein. Basing on this framework, the convergence of the full \emph{branching tree} of interfaces in the FK-representation (announced in~\cite{smirnov-06icm,smirnov-conf-inv-I}) to CLE(16/3) was eventually justified by the same authors in~\cite{kemppainen-smirnov-ii,kemppainen-smirnov-iii}. In parallel, an exploration algorithm aiming at the proof of the convergence of the classical domain walls ensemble to CLE(3) was suggested in~\cite{hongler-kytola} and later, via the convergence of the so-called \emph{free arc ensemble} established in~\cite{benoist-duminil-hongler}, this convergence to CLE(3) has also been justified by Benoist and Hongler~\cite{benoist-hongler}. Recently, another approach to derive the convergence of the domain walls loop ensemble to CLE(3) from that of the random cluster one to CLE(16/3) was suggested in~\cite{miller-sheffield-werner-cle-percolation}.

Certainly, it is absolutely impossible to provide details of these advanced developments in a short note, thus we refer the interested reader to the original articles mentioned above and hope that such a survey will appear one day. Below we only emphasize several ingredients coming from the complex analysis side and indicate the role played by the Ising model observables discussed above.

\subsection{FK-Ising (random cluster) representation and crossing probabilities}
Recall that the random-cluster representation of the critical Ising model on~$\mathbb{Z}^2$ (e.g., see~\cite[Section~2.3]{smirnov-06icm}) is a probability measure on the configurations of edges of~$\Gamma^\circ$ (each edge is declared open or closed), proportional to
\begin{equation}
\label{eq:FK-measure}
x_{\mathrm{crit}}^{\#\mathrm{closed~edges}}\,(1\!-\!x_{\mathrm{crit}}^{\vphantom{\#}})^{\#\mathrm{open~edges}}\cdot 2^{\#\mathrm{clusters}}\ \sim \ \sqrt{2}^{\#\mathrm{loops}},
\end{equation}
where~$\#\mathrm{clusters}$ stands for the number of connected components in a given configuration and $\#\mathrm{loops}$ denotes the number of loops separating these clusters of vertices of~$\Gamma^\circ$ from dual ones, living on~$\Gamma^\bullet$ (and formed by the edges of~$\Gamma^\bullet$ dual to the closed ones of~$\Gamma^\circ$); e.g., see~\cite[Fig.~3]{smirnov-06icm}. Through the \emph{Edward--Sokal coupling}, it is intimately related to the original spin model: to obtain a random cluster configuration from~$\sigma$, one tosses a (biased according to the Ising weight~$x_e$) coin for each edge~$e$ of~$\Gamma^\circ$ connecting aligned spins; and, inversely, tosses a fair coin to assign a~$\pm 1$ spin to each of the clusters of a given random cluster configuration. In particular,
\begin{equation}
\label{eq:spin=connectivity}
\mathbb{E}^{\circ}[\sigma_{u_1}...\sigma_{u_n}]=\mathbb{P}^{\mathrm{FK}}[\text{each~cluster~contains~an~even~number~of~$u_1,...,u_n$}].
\end{equation}
\begin{remark}
\label{rem:FK-vs-loops}
The fact that the probability measure~\eqref{eq:FK-measure} is proportional to~$\sqrt{2}^{\#\mathrm{loops}}$ relies upon the duality of the graphs~$\Gamma^\bullet$ and~$\Gamma^\circ$, as embedded into a~\emph{sphere}.
 Recall that we consider {all} the vertices of~$\Gamma^\circ$ on wired arcs as a \emph{single} `macro-vertex' while in~$\Gamma^\bullet$ there is one `macro-vertex' for \emph{each} of the free arcs, so this duality holds.
\end{remark}
A particularly important application of the identity~\eqref{eq:spin=connectivity} (applied to disorders on~$\Gamma^\bullet$ rather than to spins on~$\Gamma^\circ$) appears in the \emph{quadrilateral} setup, when there are two wired $(ab),(cd)$ and two free~$(bc),(da)$ arcs on~$\partial\Omega^\delta$. Namely, one has
\begin{equation}
\label{eq:crossing-proba}
\langle \mu_{(da)}\mu_{(bc)}\rangle ~ = ~
\mathbb{P}^{\mathrm{FK}}[(da)\leftrightarrow (bc)] ~=~ \varrho\,\big(\mathbb{P}^{\mathrm{loops}}[(da)\leftrightarrow (bc)]\big)\,,
\end{equation}
where~$\varrho(p):=p\cdot (p+\sqrt{2}(1\!-\!p))^{-1}$ and~$\mathbb{P}^{\mathrm{loops}}\sim \sqrt{2}^{\#\mathrm{loops}}$ denotes the probability measure on FK-Ising configurations, where only the loops lying inside of~$\Omega^\delta$ are counted. In other words, in this \emph{self-dual} setup the two wired arcs are not connected, similarly to the free ones; cf. Remark~\ref{rem:FK-vs-loops}. Note that, though~$\mathbb{P}^{\mathrm{loops}}$ does not literally correspond to any FK-Ising measure, it is absolutely continuous with respect to~$\mathbb{P}^{\mathrm{FK}}$, with the density proportional to~$1+(\sqrt{2}-1)\cdot \bm{1}[(da)\leftrightarrow (bc)]$.

\subsection{Crossing estimates and tightness}
When studying the convergence of random curves~$\gamma^\delta$ as~$\delta\to 0$ (e.g.,~$\gamma^\delta$ can be an interface generated by Dobrushin boundary conditions~\cite{5-authors}, or a branch of the FK-Ising tree~\cite{kemppainen-smirnov-ii,kemppainen-smirnov-iii} or of the free arc ensemble~\cite{benoist-duminil-hongler,benoist-hongler}), an important step is to establish the tightness of this family in the topology induced by the natural metric on the space of curves considered up to reparametrizations.
Departing from the classical results of~\cite{aizenman-burchard}, which appeared even before the introduction of SLE, Kemppainen and Smirnov~\cite{kemppainen-smirnov} showed that a very weak estimate on the probability of an \emph{annulus crossing} implies not only the tightness of~$\gamma^\delta$ themselves but also the tightness of the corresponding random driving forces in the Loewner equations and a uniform bound on exponential moments of these driving forces. Below we only emphasize the following two points of the study of crossing estimates; see the original article~\cite{kemppainen-smirnov} for more details.

\smallskip

$\bullet$ The annulus crossing estimate mentioned above (\emph{geometric} Condition~G in~\cite{kemppainen-smirnov}) was shown to be equivalent to a similar estimate on crossings of general topological quadrilaterals (\emph{conformal} Condition~C in~\cite{kemppainen-smirnov}), uniform in the \emph{extremal length} of a quadrilateral. The latter is conformally invariant by definition, which makes the framework developed in~\cite{kemppainen-smirnov} extremely well-suited to the SLE theory.

$\bullet$ Provided suitable monotonicity (with respect to the position of the boundary and the boundary conditions) arguments are available, it is enough to obtain the required crossing estimates for \emph{two standard quadrilaterals} only, with alternating (i.e., wired/free/wired/free or `$+/-/+/-$', respectively) boundary conditions.

\smallskip

For the FK-Ising model, it is then enough to prove a uniform (in~$\delta$) lower bound for the quantities~\eqref{eq:crossing-proba}. In fact, using techniques described in Section~\ref{sec:conv_corr}, one can even find the limit of these correlations as~$\delta\to 0$ (see also~\cite[Theorem~6.1]{chelkak-smirnov} for a shortcut suggested earlier by Smirnov, which reduces~\eqref{eq:crossing-proba} to the analysis of basic s-holomorphic observables). A similar crossing estimate for the spin-Ising model can be then easily deduced via the Edwards--Sokal coupling and another application of the FKG inequality, see \cite[Remark~4]{5-authors} and~\cite[Section~5.3]{chelkak-duminil-hongler} (a more involved but self-contained argument can be found in~\cite[Section~4.2]{kemppainen-smirnov}).

\begin{remark}
\label{rem:strong-RSW}
In absence of the required monotonicity arguments, one can always use the `strong' RSW-type theory developed for the critical FK-Ising model in~\cite{chelkak-duminil-hongler} (basing, in particular, on techniques from~\cite{chelkak-toolbox}) to verify Condition~C of~\cite{kemppainen-smirnov}.
\end{remark}

\subsection{Martingale observables and convergence of interfaces}
According to the classical martingale principle, aiming to describe the scaling limit of a single interface~$\gamma^\delta$ running in~$\Omega^\delta$ from a marked boundary point~$\gamma^\delta(0)=a^\delta$, one can (try to) find an observable~$M^\delta(z)=M_{\Omega^\delta,a^\delta}(z)$ such that, for each~$z\in\Omega^\delta$, the value~$M_{\Omega^\delta\setminus \gamma^\delta[0;t],\gamma^\delta(t)}(z)$ is a martingale with respect to the filtration generated by~$\gamma^\delta[0;t]$. Provided that the scaling limit~$M(z)$ of~$M^\delta(z)$ as~$\delta\to 0$ can be identified and the tightness conditions discussed in the previous section are fulfilled, one thus gets a \emph{family} (indexed by~$z\in\Omega$) of martingales~$M_{\Omega\setminus\gamma[0;t],\gamma(t)}(z)$ for a (subsequential) limit~$\gamma$ of~$\gamma^\delta$, which is enough to identify its law as that of SLE($\kappa$). Note, however, that some additional analysis is usually required when working with more general SLE($\kappa$,$\rho$) curves in order to control their behavior on the set of times when the corresponding Bessel process hits~$0$, see~\cite{kemppainen-smirnov-ii,kemppainen-smirnov-iii} and~\cite{benoist-duminil-hongler,benoist-hongler} for details.

The convergence results on correlation functions described above provide such martingale observables~$M^\delta(z)$, amenable to the analysis in the limit~$\delta\to 0$, for a huge variety of setups, in both spin- and FK-representations of the model. E.g.,

\smallskip

$\bullet$\ \ $\langle\psi(z)\chi_a\rangle/\langle\mu_b\mu_a\rangle$ is a martingale for the \emph{domain wall} (converging~\cite{5-authors}, as $\delta\!\to\!0$, to SLE(3)) generated by Dobrishin (`$-$' on~$(ab)$,~`$+$' on~$(ba)$) boundary conditions;

$\bullet$\ \ more generally, $\langle\psi(z)\chi_a\rangle/\langle\mu_{(bc)}\mu_a\rangle$ is a martingale for the domain wall emanating from~$a$ under~`$+/-/\mathrm{free}$' boundary conditions (this interface converges~\cite{hongler-kytola,izyurov-free} to the dipolar SLE(3,-3/2,-3/2), an important building block of~\cite{benoist-duminil-hongler,benoist-hongler});

$\bullet$\ \ $\langle\psi(z)\mu_{(ba)}\sigma_{(ab)}\rangle$ is a martingale for the \emph{FK-interface} generated by Dobrushin (wired on~$(ab)$, free on~$(ba)$) boundary conditions (this is the observable introduced by Smirnov~\cite{smirnov-06icm,smirnov-conf-inv-I} to prove the convergence of this interface to SLE(16/3)).

\begin{remark} Following~\cite{smirnov-06icm,smirnov-conf-inv-I,chelkak-smirnov,duminil-smirnov-clay}, these martingale observables are usually defined in a purely combinatorial way in the existing literature (and not via the Kadanoff--Ceva formalism, cf. Remark~\ref{rem:comb-def-parafermions}). One of the advantages of this approach is that the martingale property becomes a triviality. On the other hand, the origin of the crucial s-holomorphicity property becomes less transparent.
\end{remark}

\subsection{Convergence of loop ensembles}
As already mentioned above, we do not intend to overview the details of~\cite{kemppainen-smirnov-ii,kemppainen-smirnov-iii} (convergence of the FK-Ising loop ensemble to CLE(16/3)) and of~\cite{benoist-duminil-hongler,benoist-hongler} (convergence of the domain walls ensemble to CLE(3))  in this essay and refer the interested reader to the original articles. In both projects, some iterative procedure is used: a branching exploration of loops in the former (which is a prototype of the branching SLE-tree from~\cite{sheffield}) and an alternate exploration of FK-Ising clusters and free arc ensembles (proposed in~\cite{hongler-kytola}) in the latter. Because of the hierarchial nature of these algorithms, the following subtlety arises: even if the convergence of each of FK-interfaces to SLE(16/3) curves is known, one still needs to control the behavior of its double points, which split the current domain into smaller pieces to be explored (and also of the endpoints of arcs constituting the free arc ensemble in~\cite{benoist-duminil-hongler}), in order to guarantee that the exploration algorithm in discrete does not deviate too much from the continuum one. This is done\begin{footnote}{The results of~\cite{kemppainen-smirnov-ii,kemppainen-smirnov-iii} can be made self-contained though the current version relies upon~\cite{chelkak-duminil-hongler} (while checking the required crossing estimates for the full tree of interfaces in the proof of \cite[Theorem~3.4]{kemppainen-smirnov-ii}) in order to lighten the presentation (S.~Smirnov, private communication).}\end{footnote}
using the strong RSW theory~\cite{chelkak-duminil-hongler}, which guarantees that all such `pivotal' points in continuum are the limits of those in discrete, uniformly in the shape of subdomains obtained along the way and boundary conditions, cf. Remark~\ref{rem:strong-RSW}.

\section{Towards universality beyond isoradial graphs}
\label{sec:S-embeddings}
\setcounter{equation}{0}

As discussed above, the fundamental questions of convergence and conformal invariance of the critical Ising model on the square grid are now relatively well understood, both for correlation functions and loop ensembles. Moreover, a great part, if not all, of these results can be generalized from~$\mathbb{Z}^2$ to the critical \emph{Z-invariant} model using the already existing techniques. Nevertheless, this does not give a fully satisfactory understanding of the \emph{universality} phenomenon since the cornerstone `sub-/super-harmonicity' Lemma~\ref{lemma:positivity} does not admit a direct generalization beyond the isoradial case, see Remark~\ref{rem:no-go-positivity}. E.g., even the convergence of Ising interfaces on {doubly-periodic} weighted graphs has never been treated though the criticality condition on the weights in this case is well known; see~\cite{cimasoni-duminil} and references therein.

The main purpose of this section is to discuss a new class of embeddings of \emph{generic} planar weighted graphs carrying the  Ising model into the complex plane, with the emphasis on an analogue of the `s-Lemma'~\ref{lemma:positivity}. Below we only sketch some important features of the construction, details will appear elsewhere. Independently of our paper, a special class of s-embeddings -- circle patterns -- was studied by Lis in~\cite{lis-circle-patterns} and the \emph{criticality} was proven in the case of uniformly bounded faces.

\subsection{S-embeddings of weighted planar graphs} Below we adopt the notation of Section~\ref{sec:formalism}. To construct a particular embedding into~$\mathbb{C}$ of a given planar weighted graph~$(G,x)$ we choose \emph{a pair~$\mathcal{F}_1,\mathcal{F}_2$ of s-holomorphic spinors} on~$\Upsilon^\times(G)$ and denote~$\mathcal{F}:=\mathcal{F}_1+i\mathcal{F}_2$. For instance, one can imagine one of the following setups:
\begin{itemize}
\item $G$ is an infinite graph, $\mathcal{F}_1,\mathcal{F}_2$ are s-holomorphic everywhere on~$\Upsilon^\times(G)$;
\item $G$ has the topology of a sphere, the condition~\eqref{eq:propagation_on_corners} is relaxed on a `root' quad (note that $\mathcal{F}_1,\mathcal{F}_2$ cannot be s-holomorphic everywhere on~$\Upsilon^\times(G)$);
\item a finite graph~$G$, the s-holomorphicity of~$\mathcal{F}_1,\mathcal{F}_2$ is relaxed at the boundary.
\end{itemize}
Clearly, the propagation equation~\eqref{eq:propagation_on_corners} still holds for the \emph{complex-valued} spinor~$\mathcal{F}$ and hence one can define a complex-valued function~$\mathcal{S}:=H_\mathcal{F}$ on~$G^\bullet\cup G^\circ$ by~\eqref{eq:def_of_H}. We interpret~$\mathcal{S}$ as an embedding of~$G$ into~$\mathbb{C}$ and call it \emph{s-embedding}. Note that, similarly to~\cite[Section~3.2]{russkikh}, the function~$\mathcal{S}$ can be also defined on~$z\in\diamondsuit(G)$ by
\begin{equation}
\label{eq:H_on_diamond}
\begin{array}{rcl}
\mathcal{S}(v^\bullet_p)-\mathcal{S}(z) & :=&\cos{\theta}\cdot \mathcal{F}(c_{p,0})\mathcal{F}(c_{p,1}),\\
\mathcal{S}(z)-\mathcal{S}(v^\circ_q) & := & \sin{\theta}\cdot \mathcal{F}(c_{0,q})\mathcal{F}(c_{1,q}),
\end{array}
\end{equation}
where the two corners~$c_{p,0}$ and~$c_{p,1}$ (resp., $c_{0,q}$ and~$c_{1,q}$) are chosen on the same sheet of~$\Upsilon^\times(G)$; these equations are consistent with~\eqref{eq:def_of_H} due to~\eqref{eq:propagation_on_corners}.

It is easy to see that~\eqref{eq:propagation_on_corners} also implies the identity
\begin{align}
\notag |\mathcal{S}(v^\bullet_0)-\mathcal{S}(v^\circ_0)|+|\mathcal{S}(v^\bullet_1)-\mathcal{S}(v^\circ_1)| & =|\mathcal{F}(c_{00})|^2+|\mathcal{F}(c_{11})|^2\\
\label{eq:tangential} = |\mathcal{F}(c_{01})|^2+|\mathcal{F}(c_{10})|^2 & = |\mathcal{S}(v^\bullet_0)-\mathcal{S}(v^\circ_1)|+|\mathcal{S}(v^\bullet_1)-\mathcal{S}(v^\circ_0)|,
\end{align}
which means that~$\mathcal{S}(v^\bullet_0)$, $\mathcal{S}(v^\circ_0)$, $\mathcal{S}(v^\bullet_1)$ and $\mathcal{S}(v^\circ_1)$ form a \emph{tangential} (though possibly non-convex) quadrilateral in the plane. In fact, one can easily see that~$\mathcal{S}(z)$, if defined according to~\eqref{eq:H_on_diamond}, is the center of the circle inscribed into this quadrilateral.

\begin{remark} (i) Let us emphasize that the parameters~$\theta_e$ in the s-holomorphicity condition~\eqref{eq:propagation_on_corners} are \emph{defined} as~$\theta_e:=2\arctan x_e$ and have no straightforward geometrical meaning similar to isoradial embeddings discussed above (though see~\eqref{eq:theta=}).

\noindent (ii) One can easily check that the isoradial embedding of the critical Z-invariant model is a particular case of the above construction in which~$\mathcal{F}(c)=\varsigma\delta^{1/2}\cdot \overline{\eta}_c$; note that, in this case, the Dirac spinor~\eqref{eq:Dirac_spinor} satisfies the propagation equation~\eqref{eq:propagation_on_corners}.
\end{remark}

A priori, there is no guarantee that the combinatorics of the s-embedding constructed above matches the one of~$\diamondsuit(G)$, considered as an abstract \emph{topological} (i.e., embedded into~$\mathbb{C}$ up to homotopies) graph: the images~$(\mathcal{S}(v^\bullet_0)\mathcal{S}(v^\circ_0)\mathcal{S}(v^\bullet_1)\mathcal{S}(v^\circ_1))$ of quads~$(v^\bullet_0 v^\circ_0 v^\bullet_1 v^\circ_1)$ might overlap. Below we assume that this does \emph{not} happen and, moreover, all~$(\mathcal{S}(v^\bullet_0)\mathcal{S}(v^\circ_0)\mathcal{S}(v^\bullet_1)\mathcal{S}(v^\circ_1))$ are nondegenerate and oriented counterclockwise  (except maybe the `root' one). We call such~$\mathcal{S}$ \emph{proper s-embeddings}.

We now introduce a set of geometric parameters characterizing an s-embedding~$\mathcal{S}$ up to translations and rotations. For a quad~$(v^\bullet_0v^\circ_0v^\bullet_1v^\circ_1)=z\in\diamondsuit(G)$, let $r_z$ denote the radius of the circle inscribed into its image~$(\mathcal{S}(v^\bullet_0)\mathcal{S}(v^\circ_0)\mathcal{S}(v^\bullet_1)\mathcal{S}(v^\circ_1))$ and let
\[
\phi_{zv^\bullet_p}:=\frac{1}{2}\,\mathrm{arg}\frac{\mathcal{S}(v^\circ_{1-p})-\mathcal{S}(v^\bullet_p)} {\mathcal{S}(v^\circ_{p})-\mathcal{S}(v^\bullet_p)},\quad \phi_{zv^\circ_q}:=\frac{1}{2}\,\mathrm{arg}\,\frac{\mathcal{S}(v^\bullet_{q})-\mathcal{S}(v^\circ_q)} {\mathcal{S}(v^\bullet_{1-q})-\mathcal{S}(v^\circ_q)}
\]
be the half-angles of~$(\mathcal{S}(v^\bullet_0)\mathcal{S}(v^\circ_0)\mathcal{S}(v^\bullet_1)\mathcal{S}(v^\circ_1))$, note that~$\phi_{zv^\bullet_0}+\phi_{zv^\circ_0}+\phi_{zv^\bullet_1}+\phi_{zv^\circ_1}=\pi$ and
$|\mathcal{S}(v)-\mathcal{S}(z)|=(\sin\phi_{zv})^{-1}r_z$ for each of the four vertices~$v=v^\bullet_0,v^\circ_0,v^\bullet_1,v^\circ_1$.
A straightforward computation shows that
\begin{equation}
\label{eq:theta=}
\tan\theta_{v_0^\bullet v_1^\bullet} = \left(\frac{\cot\phi_{zv^\circ_0}\!+\cot\phi_{zv^\circ_1}}{\cot\phi_{zv^\bullet_0}\!+\cot\phi_{zv^\bullet_1}}\right)^{\!\!1/2}\! = \, \left(\frac{\sin\phi_{zv^\bullet_0}\sin\phi_{zv^\bullet_1}}{\sin\phi_{zv^\circ_0}\sin\phi_{zv^\circ_1}}\right)^{\!\!1/2},
\end{equation}
where~$\theta_e=2\arctan x_e$ is the standard parametrization of the Ising model weights.

\begin{remark} It is worth mentioning that the construction of an s-embedding described above is \emph{revertible}. Namely, given a proper embedding~$\mathcal{S}$ of~$(G^\bullet,G^\circ)$ into the complex plane formed by non-degenerate tangential quads, one can define a complex-valued spinor~$\mathcal{F}(c):=(\mathcal{S}(v^\bullet(c))-\mathcal{S}(v^\circ(c)))^{1/2}$ on~$\Upsilon^\times(G)$ and deduce from~\eqref{eq:tangential} that~\eqref{eq:propagation_on_corners} holds true for \emph{some}~$\theta_e$ (which must then coincide with~\eqref{eq:theta=}). In other words, given~$\mathcal{S}$, one can find Ising weights~$x_e=\tan\frac{1}{2}\theta_e$ on~$G$ and a pair~$\mathcal{F}_1,\mathcal{F}_2$ of real-valued spinors satisfying~\eqref{eq:propagation_on_corners} with these~$\theta_e$ such that~$\mathcal{S}=\mathcal{H}_\mathcal{F}$.
\end{remark}

\subsection{S-subharmonicity} Miraculously enough, it turns out that the cornerstone Lemma~\ref{lemma:positivity} actually \emph{admits} a generalization to the setup described above, though not a straightforward one, cf. Remark~\ref{rem:no-go-positivity}.
Let~$H$ be a function defined in a vicinity of a given vertex~$v^\bullet\in G^\bullet$ or~$v^\circ\in G^\circ$. We define its \emph{s-Laplacian}~$\Delta_{\mathcal{S}}H$ as
\begin{align*}
[\Delta_{\mathcal{S}}H](v^\bullet):=\sum\nolimits_{v_1^\bullet\sim v^\bullet} a_{v^\bullet v^\bullet_1}(H(v_1^\bullet)\!-\!H(v^\bullet))+ \sum\nolimits_{v^\circ\sim v^\bullet} b_{v^\bullet v^\circ}(H(v^\circ)\!-\!H(v^\bullet)),\\
[\Delta_{\mathcal{S}}H](v^\circ):= \sum\nolimits_{v^\bullet\sim v^\circ} b_{v^\circ v^\bullet}(H(v^\bullet)\!-\!H(v^\circ))- \sum\nolimits_{v_1^\circ\sim v^\circ} a_{v^\circ v^\circ_1}(H(v_1^\circ)\!-\!H(v^\circ)),
\end{align*}
where, for each quad~$(v_0^\bullet v_0^\circ v_1^\bullet v_1^\circ)=z\in \diamondsuit(G)$, one has
\begin{equation}
\label{eq:a_def}
a_{v^\bullet_0v^\bullet_1}=a_{v^\bullet_1v^\bullet_0}:=r_z^{-1}\sin^2\theta_{v_0^\bullet v_1^\bullet},\qquad a_{v^\circ_0v^\circ_1}=a_{v^\circ_1v^\circ_0}:=r_z^{-1}\cos^2\theta_{v_0^\bullet v_1^\bullet},
\end{equation}
and, for each edge~$(v^\bullet v^\circ)=(v^\bullet_0(z)v^\circ_1(z))=(v^\bullet_0(z')v^\circ_0(z'))$ separating $z,z'\in\diamondsuit(G)$,
\begin{equation}
\label{eq:b_def}
b_{v^\bullet v^\circ} := a_{v^\circ v^\circ_0(z)}-\frac{r_z^{-1}\cot\phi_{zv^\bullet}}{\cot\phi_{zv^\bullet}\!+\cot\phi_{zv^\circ}}  + a_{v^\circ v^\circ_1(z')} - \frac{r_{z'}^{-1}\cot\phi_{z'v^\bullet}}{\cot\phi_{z'v^\bullet}\!+\cot\phi_{z'v^\circ}}
\end{equation}
and~$b_{v^\circ v^\bullet}:=b_{v^\bullet v^\circ}$, so that~$\Delta_\mathcal{S}=\Delta_\mathcal{S}^\top$ is symmetric (though not sign-definite).

\begin{remark} For isoradial embeddings of graphs carrying the critical Z-invariant Ising model one has~$a_{v^\bullet_0v^\bullet_1}=\delta^{-1}\tan\theta_{v^\bullet_0v^\bullet_1}$,~$a_{v^\circ_0v^\circ_1}=\delta^{-1}\cot\theta_{v^\bullet_0v^\bullet_1}$ and~$b_{v^\circ v^\bullet}=0$, thus~$\Delta_\mathcal{S}$ is simply the direct sum of the two \emph{signed} Laplacians~$\delta^{-1}\Delta^\bullet$ and~$-\delta^{-1}\Delta^\circ$.
\end{remark}

\begin{lemma} Let~$\mathcal{S}=H_{\mathcal{F}_1+i\mathcal{F}_2}$ be a proper s-embedding and a function~$H_F$ be obtained via~\eqref{eq:def_of_H} from a real-valued s-holomorphic spinor~$F$ defined on~$\Upsilon^\times(G)$ in a vicinity of a vertex~$v\in \Lambda(G)$. Then,~$[\Delta_\mathcal{S} H_F]\ge 0$. Moreover, $[\Delta_\mathcal{S}H_F]=0$ if and only if~$F$ is a linear combination of~$\mathcal{F}_1$ and~$\mathcal{F}_2$ on corners of~$G$ incident to~$v$.
\end{lemma}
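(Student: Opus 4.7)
The plan is to expand $[\Delta_\mathcal{S} H_F](v)$ as an explicit quadratic form in the values of $F$ at corners near $v$ and to verify this form is PSD with kernel spanned by $\mathcal{F}_1,\mathcal{F}_2$. Using~\eqref{eq:def_of_H}, every increment $H_F(v')-H_F(v)$ appearing in the definition of $\Delta_\mathcal{S}$ can be rewritten as a quadratic in two values of $F$; combined with the natural splitting of $b_{v^\bullet v^\circ}$ built into~\eqref{eq:b_def}, this yields a quad-by-quad decomposition $[\Delta_\mathcal{S} H_F](v) = \sum_{z\ni v} Q_{v,z}(F)$. Treating for concreteness $v=v^\bullet\in G^\bullet$ (the case $v\in G^\circ$ is analogous), I would parametrise the propagation-constrained $F$-values on a quad $z$ by $x=F(c_{00}(z))$ and $y=F(c_{01}(z))$ (the two corners incident to $v^\bullet$), solve~\eqref{eq:propagation_on_corners} for $F(c_{10}(z))$, and then a direct calculation using~\eqref{eq:a_def}--\eqref{eq:b_def} together with the identity~\eqref{eq:theta=} should reduce each local contribution to the form
\[
Q_{v,z}(F) = r_z^{-1}\!\left[\tfrac{\cot\phi_{zv^\bullet}}{\cot\phi_{zv^\bullet}+\cot\phi_{zv^\circ_0(z)}}\, x^2 + \tfrac{\cot\phi_{zv^\bullet}}{\cot\phi_{zv^\bullet}+\cot\phi_{zv^\circ_1(z)}}\, y^2 - 2\cos\theta_z\cdot xy\right].
\]

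The main obstacle is that the per-quad form $Q_{v,z}$ is generically \emph{indefinite} (its discriminant is typically positive), so the required positivity can only emerge after summing over the quads $z_1,\dots,z_d$ around $v^\bullet$, crucially using the spinor branching $F(e_{d+j})=-F(e_j)$ at the near corners $e_1,\dots,e_d$ of $v^\bullet$ which encodes that $v^\bullet$ is a branch point of $\Upsilon^\times(G)$. The key step I would perform is to regroup the cross-terms across two consecutive quads $z_{i-1},z_i$ sharing a near corner $e_i$: at this shared corner the half-angles $\phi_{z_{i-1}v^\circ(e_i)}$ and $\phi_{z_iv^\circ(e_i)}$ are associated to the \emph{same} vertex $v^\circ(e_i)\in G^\circ$, and the corresponding coefficients can be combined using the law of sines applied to the pair of tangential quads. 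The aim is to rewrite the total as a sum of squares of the $2\times 2$ minors $F(c)\mathcal{F}_j(c')-F(c')\mathcal{F}_j(c)$ with manifestly positive weights, where $\mathcal{F}_j$ ($j=1,2$) are the real-valued s-holomorphic spinors used to build~$\mathcal{S}$.

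Once such a sum-of-squares representation is in hand, both assertions of the lemma follow. Substituting $F=\alpha\mathcal{F}_1+\beta\mathcal{F}_2$ makes every minor $F(c)\mathcal{F}_j(c')-F(c')\mathcal{F}_j(c)$ vanish, so $\operatorname{span}(\mathcal{F}_1,\mathcal{F}_2)$ lies in the kernel; conversely, the vanishing of all squared factors forces $F$, restricted to the corners incident to $v$, to be a linear combination of $\mathcal{F}_1|_{\text{near }v}$ and $\mathcal{F}_2|_{\text{near }v}$, provided these are linearly independent there, which is a natural genericity assumption guaranteed by the properness of~$\mathcal{S}$.
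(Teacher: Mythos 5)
Your overall skeleton --- expand $[\Delta_\mathcal{S}H_F](v)$ as a quadratic form in the corner values $F(c_0),\dots,F(c_{n-1})$ around $v$, prove positive semi-definiteness, and identify a two-dimensional kernel --- is the same as the paper's, and several of your preliminary remarks (the quad-by-quad bookkeeping, the indefiniteness of each local contribution, the essential role of the spinor branching around $v$, and the fact that properness forces $\mathcal{F}_1,\mathcal{F}_2$ to be independent near $v$) are sound. But the step that is supposed to deliver positivity fails. You want to write the form as a positively weighted sum of squares of the $2\times2$ minors $M_{j,c,c'}(F)=F(c)\mathcal{F}_j(c')-F(c')\mathcal{F}_j(c)$, justified by the claim that substituting $F=\alpha\mathcal{F}_1+\beta\mathcal{F}_2$ kills every such minor. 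That claim is false: for $F=\mathcal{F}_2$ and $j=1$ the minor equals the Wronskian $\mathcal{F}_2(c)\mathcal{F}_1(c')-\mathcal{F}_2(c')\mathcal{F}_1(c)=-\mathrm{Im}[\overline{\mathcal{F}(c)}\mathcal{F}(c')]$, and for two corners $c=c_{s-1}$, $c'=c_s$ of a quad $z$ incident to $v$ one reads off from \eqref{eq:a_def} and \eqref{eq:a_via_F} that $r_z=\cos\theta_s\,\rho_{s-1}\rho_s\sin\phi_s=\cos\theta_s\,\mathrm{Im}[\overline{\mathcal{F}(c_{s-1})}\mathcal{F}(c_s)]$, so properness ($r_z>0$) forces this Wronskian to be \emph{nonzero}. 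Consequently any sum $\sum_k w_kM_k^2$ with $w_k>0$ has kernel $\{F:\,F\propto\mathcal{F}_1\ \text{and}\ F\propto\mathcal{F}_2\}=\{0\}$ near $v$, so no decomposition of the shape you describe can have the two-dimensional kernel the lemma asserts: the target representation is structurally impossible, not merely hard to compute. Linear forms admissible for such a sum-of-squares argument must vanish on the whole span of $\mathcal{F}_1,\mathcal{F}_2$ and hence involve at least three corners (i.e.\ $3\times3$ minors against the pair $(\mathcal{F}_1,\mathcal{F}_2)$); neither you nor the paper produces such a decomposition.

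What the paper actually does at this point is a change of variables rather than a sum of squares: writing $\mathcal{F}(c_s)=\alpha e^{i(\phi_1+\cdots+\phi_s)}\rho_s$ with $\rho_s=|\mathcal{F}(c_s)|$, it expresses the coefficients via \eqref{eq:a_via_F}--\eqref{eq:b_via_F} and verifies by brute force that $[\Delta_\mathcal{S}H_F](v^\bullet_0)=Q^{(n)}_{\phi_1;\dots;\phi_n}(\rho_0^{-1}F(c_0),\dots,\rho_{n-1}^{-1}F(c_{n-1}))$, i.e.\ exactly the non-negative quadratic form already analysed in the isoradial case in~\cite[p.~543]{chelkak-smirnov}; the spinor property of $\mathcal{F}$ gives $\phi_1+\cdots+\phi_n=\pi$, and the known kernel of $Q^{(n)}$, consisting of the vectors $\bigl(\mathrm{Re}[\mu e^{i(\phi_1+\cdots+\phi_s)}]\bigr)_s$ with $\mu\in\mathbb{C}$, pulls back under the rescaling by $\rho_s$ precisely to the restrictions of $\alpha\mathcal{F}_1+\beta\mathcal{F}_2$. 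To complete your argument you would need either to carry out this reduction (or an equivalent explicit verification) or to exhibit genuinely admissible vanishing linear forms; as written, the central positivity step has no proof and its announced target cannot be reached.
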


\begin{proof} As in the isoradial case (see~\cite[Proposition~3.6(iii)]{chelkak-smirnov}), we are only able to check this by brute force though clearly a more conceptual explanation of this phenomenon must exist. E.g., if one numbers the quads~$z_1,...,z_n$ around \mbox{$v^\bullet_0\in G^\bullet$} counterclockwise and adopts the notation~$z_s=(v^\bullet_0 v^\circ_{s-1} v^\bullet_s v^\circ_s)$, $c_s=(v_0^\bullet v_s^\circ)$, and~$\mathcal{F}(c_s)=\alpha e^{i(\phi_1+...+\phi_s)}\rho_s$ with~$\alpha=\exp[\,i\!\arg\mathcal{F}(c_0)]$ and~$\rho_s=|\mathcal{F}(c_s)|>0$, 
then
\begin{align}
\label{eq:a_via_F}
a_{v^\bullet_0 v^\bullet_{s}}&=\frac{\sin\theta_{s}\tan\theta_{s}}{\rho_{s-1}\rho_s\sin\phi_s},\qquad
a_{v^\circ_{s-1}v^\circ_s}=\frac{\cos\theta_s}{\rho_{s-1}\rho_s\sin\phi_s},\\
\label{eq:b_via_F} b_{v^\bullet_0 v^\circ_s} &= 
\frac{\cos\theta_s}{\rho_{s-1}\rho_s\sin\phi_s}+\frac{\cos\theta_{s+1}}{\rho_s\rho_{s+1}\sin\phi_{s+1}}
- \frac{\sin(\phi_s\!+\!\phi_{s+1})}{\rho_s^2\sin\phi_s\sin\phi_{s+1}}\,,
\end{align}
and~$[\Delta_\mathcal{S}H](v^\bullet_0)$ turns out to be a non-negative quadratic form in the variables~$F(c_s)$:
\[
[\Delta_\mathcal{S} H_F](v^\bullet_0)=Q^{(n)}_{\phi_1;...;\phi_n}(\rho_0^{-1}F(c_0),...,\rho_{n-1}^{-1}F(c_{n-1}))\ge 0\,,
\]
see \cite[p.~543]{chelkak-smirnov} for the definition of~$Q^{(n)}_{\phi_1;...;\phi_n}$. The case~$v\in G^\circ$ is similar.
\end{proof}

\begin{definition}
\label{def:s-(sub)harm} We call a function defined on (a subset of)~$\Lambda(G)$ \emph{s-subharmonic} if the inequality~$\Delta_\mathcal{S}H\ge 0$ holds true pointwise and \emph{s-harmonic} if~$\Delta_\mathcal{S}H=0$ pointwise.
\end{definition}

\begin{remark}
Though this is not fully clear at the moment, we hope that, at least in some situations of interest (e.g., see Section~\ref{sub:doubly-periodic} or~\cite{lis-circle-patterns}), {s-subharmonic} functions~$H_F$ obtained via~\eqref{eq:def_of_H} are \emph{a priori} close to {s-harmonic} ones; recall that this is exactly the viewpoint developed in~\cite[Section~3]{chelkak-smirnov} for the critical Z-invariant model. Also, note that extending the domain of definition of~$H_F$ to~$\diamondsuit(G)$ similarly to~\eqref{eq:H_on_diamond}, one can easily see that thus obtained functions satisfy the maximum principle.
\end{remark}

\subsection{Factorization of the s-Laplacian}
An important feature of the isoradial setup is the following factorization of the direct \emph{sum} of~$\Delta^\bullet$ and~$\Delta^\circ$, see~\cite{kenyon-inv-02} or \cite{chelkak-smirnov-adv}:
\begin{equation}
\label{eq:iso-Delta-fact}
-\delta^{-1}(\Delta^\bullet\!+\!\Delta^\circ)=16\partial_\Lambda^{\,*}\!R\, \partial_\Lambda^{\phantom{*}}=16\overline{\partial}{}_\Lambda^{*}\!R\, \overline{\partial}{}_\Lambda^{\phantom{*}}\,,
\end{equation}
where the Cauchy-Riemann operator~$\partial_\Lambda$ is given by~\eqref{eq:def_d_iso} and~$R:=\mathrm{diag}\{r_z\}_{z\in\diamondsuit(G)}$; note that one has~$r_z=\frac{1}{4}\delta^{-1}|v^\bullet_1-v^\bullet_0||v^\circ_1-v^\circ_0|$ in this special case. Since the s-Laplacian~$\Delta_\mathcal{S}$ is not sign-definite (recall that, on rhombic lattices,~$\Delta_{\mathcal{S}}$ is the \emph{difference}~$\delta^{-1}(\Delta^\bullet-\Delta^\circ)$), the factorization~\eqref{eq:iso-Delta-fact} cannot be generalized directly. However, there exists a way to rewrite\begin{footnote}{For rhombic lattices, one has a very special intertwinning identity~$\overline{\partial}_\Lambda=U\partial_\Lambda(-\mathrm{Id}^\bullet\!+\mathrm{Id}^\circ)$.}\end{footnote} it in the full generality of s-embeddings:
\begin{equation}
\label{eq:Delta_S_factorization}
\Delta_\mathcal{S}=16\partial_\mathcal{S}^{\,*}U^{-1}\!R\,\overline{\partial}{}^{\phantom{*}}_\mathcal{S}= 16\overline{\partial}{}_\mathcal{S}^{\,*}\overline{U}^{-1}\!R\,\partial^{\phantom{*}}_\mathcal{S},
\end{equation}
where
\[
[\overline{\partial}_{\mathcal{S}}H](z) := \frac{\mu_z}{4}\biggl[\frac{H(v^\bullet_0)}{\mathcal{S}(v^\bullet_0)\!-\!\mathcal{S}(z)}+ \frac{H(v^\bullet_1)}{\mathcal{S}(v^\bullet_1)\!-\!\mathcal{S}(z)}- \frac{H(v^\circ_0)}{\mathcal{S}(v^\circ_0)\!-\!\mathcal{S}(z)} -\frac{H(v^\circ_1)}{\mathcal{S}(v^\circ_1)\!-\!\mathcal{S}(z)}\biggr],
\]
the prefactor~$\mu_z$ is chosen so that~$[\overline{\partial}_{\mathcal{S}}\,\overline{\mathcal{S}}](z)=1$, the operator
$\partial_\mathcal{S}$ is defined 
so that~$\overline{\partial_\mathcal{S}H}=\overline{\partial}_{\mathcal{S}}\overline{H}$, and~$U:=\mathrm{diag}(\mu_z)_{z\in\diamondsuit(G)}$.
Moreover, the following is fulfilled:
\begin{lemma}
\label{lemma:harm-conj}
A real-valued function $H_1$ is (locally) s-harmonic on~$\Lambda(G)$ if and only if there exists (locally and hence globally in the simply conected setup) another real-valued s-harmonic function~$H_2$ on~$\Lambda(G)$ such that~$\overline{\partial}_{\mathcal{S}}(H_1\!+\!i H_2)=0$ on~$\diamondsuit(G)$. In other words, s-harmonic functions are real parts of those lying in the kernel of~$\overline{\partial}_\mathcal{S}$.
\end{lemma}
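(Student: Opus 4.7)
My plan is to derive both directions from the factorization identity~\eqref{eq:Delta_S_factorization}.

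The ``if'' direction is essentially formal. Since the coefficients $a_{vv'}$ and $b_{vv'}$ in \eqref{eq:a_def}--\eqref{eq:b_def} are real, $\Delta_\mathcal{S}$ extends $\mathbb{C}$-linearly to complex-valued functions and commutes with complex conjugation. Applying the operator $16\partial_\mathcal{S}^{\,*}U^{-1}R$ to the hypothesis $\overline{\partial}_\mathcal{S}(H_1+iH_2)=0$ and invoking \eqref{eq:Delta_S_factorization} yields $\Delta_\mathcal{S}(H_1+iH_2)=0$. Separating real and imaginary parts gives $\Delta_\mathcal{S} H_1=\Delta_\mathcal{S} H_2=0$, so $H_2$ is automatically s-harmonic once $H_1$ is.

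For the ``only if'' direction, given a real-valued s-harmonic $H_1$ on a simply connected piece of $\Lambda(G)$, I would construct the desired conjugate $H_2$ by discrete path integration, in close analogy with the classical construction of a harmonic conjugate. On each quad $z\in\diamondsuit(G)$, the equation $[\overline{\partial}_\mathcal{S}(H_1+iH_2)](z)=0$ is a single complex (hence two real) linear relation among the four vertex values of $H_2$; separating real and imaginary parts, it prescribes the edge-increments of $H_2$ on the sides of $z$ in terms of the values of $H_1$ near $z$ and the geometric data ($r_z$, $\phi_{zv}$, $\theta_e$ via~\eqref{eq:theta=}) of the s-embedding. Assembling these into a real-valued ``discrete 1-form'' $\omega$ on the edges of $\Lambda(G)$, one sets $H_2(v):=\sum_\gamma\omega$ along a path $\gamma$ in $\Lambda(G)$ from a fixed basepoint $v_0$ to~$v$; by the ``if'' direction, the resulting $H_2$ will be s-harmonic.

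The main obstacle is verifying that $\omega$ is \emph{closed}, so that the above integration is path-independent. The elementary cycles to check in $\Lambda(G)$ are of two types: those bounding a single quad, for which closedness is automatic from the per-quad construction of $\omega$; and those going around a single vertex $v\in\Lambda(G)$ through the edges of all its adjacent quads. For the latter, I expect that after a direct algebraic manipulation paralleling the ``brute-force'' computation in the preceding s-subharmonicity lemma (cf.\ \eqref{eq:a_via_F}--\eqref{eq:b_via_F}) and relying on the tangentiality identity~\eqref{eq:tangential} together with \eqref{eq:theta=}, the loop sum of $\omega$ around $v$ reorganizes into a nonzero scalar multiple of $[\Delta_\mathcal{S} H_1](v)$, which vanishes by hypothesis. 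This identity --- the closedness counterpart of the non-negativity identity in the s-subharmonicity lemma --- is the main technical step, and I expect it to require the same kind of miraculous but elementary algebraic verification.
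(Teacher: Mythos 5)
Your ``if'' direction is correct and is surely the intended one: apply $16\partial_\mathcal{S}^{\,*}U^{-1}\!R$ to the hypothesis, invoke~\eqref{eq:Delta_S_factorization}, and separate real and imaginary parts using the realness of the coefficients of~$\Delta_\mathcal{S}$. (The paper states the lemma without supplying a proof, so there is nothing more to compare against on this side.)

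The ``only if'' direction, however, has a genuine gap at its very first step: the equation $[\overline{\partial}_{\mathcal{S}}(H_1+iH_2)](z)=0$ does \emph{not} prescribe the edge-increments of $H_2$ along the sides of the quad $z$. It is two real conditions, whereas the increments of a real-valued function around a quad carry three real degrees of freedom. Indeed, the four coefficients $\pm\mu_z/(4(\mathcal{S}(v)-\mathcal{S}(z)))$ sum to zero (this is $\overline{\partial}_\mathcal{S}1=0$), but for a general tangential quad they are not pairwise opposite --- $\mathcal{S}(z)$ is the \emph{incenter}, not the common midpoint of the two diagonals as in the rhombic case --- so $[\overline{\partial}_\mathcal{S}H](z)$ depends on three independent differences of $H$, not two. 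Hence the per-quad solution set for the increments of $H_2$ is a one-dimensional affine space; its direction is spanned by the increments of the non-constant \emph{real-valued} function $L_\mathcal{S}\in\ker\overline{\partial}_\mathcal{S}$ introduced right after the lemma, whose existence already shows that the conjugate is only determined modulo $\mathbb{R}\cdot 1+\mathbb{R}\cdot L_\mathcal{S}$ and therefore that no canonical $1$-form $\omega$ can be read off quad by quad. As a result, your construction never produces a well-defined $\omega$ whose closedness could then be tested: before any vertex condition enters, one must choose the free parameter on each quad so that the two quads adjacent to a given edge of $\Lambda(G)$ assign the same increment to that edge, and it is in this matching step --- not in a closedness check around vertices --- that the hypothesis $\Delta_\mathcal{S}H_1=0$ (equivalently, via~\eqref{eq:Delta_S_factorization}, the co-closedness statement $\partial_\mathcal{S}^{\,*}U^{-1}\!R\,\overline{\partial}_\mathcal{S}H_1=0$ about the quad-function $U^{-1}\!R\,\overline{\partial}_\mathcal{S}H_1$) must be brought to bear. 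Your instinct that everything ultimately reduces to an algebraic identity in the spirit of~\eqref{eq:a_via_F}--\eqref{eq:b_via_F} may well be right, but the argument as organized does not reach the point where that identity would be used.
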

Another straightforward computation shows that~$\overline{\partial}_\mathcal{S}H=0$ if~$H$ is a constant and that~$\overline{\partial}_\mathcal{S}\mathcal{S}=0$, which (together with the normalization~$\overline{\partial}_\mathcal{S}\overline{\mathcal{S}}=1$) establishes some link of the difference operator~$\overline{\mathcal{S}}$ with the standard complex structure on~$\mathbb{C}$. In addition, one can check that~$\overline{\partial}_\mathcal{S}L_\mathcal{S}=0$, where the real-valued function~$L_\mathcal{S}$ is defined on~$\Lambda(G)$, up to an additive constant, by~$L_\mathcal{S}(v^\bullet(c))-L_\mathcal{S}(v^\circ(c)):=|\mathcal{S}(c)|^2$.

\subsection{Doubly-periodic graphs} \label{sub:doubly-periodic}

We now briefly discuss s-embeddings of doubly-periodic graphs~$G$ carrying a \emph{critical} Ising model. It was shown in~\cite{cimasoni-duminil} that the criticality condition is equivalent to the existence of two periodic functions in the kernel of the Kac--Ward matrix~\eqref{eq:kac-ward}, which means (e.g., see~\cite{chelkak-cimasoni-kassel}) the existence of two linearly independent \emph{periodic} (real-valued) \emph{spinors}~$\mathcal{F}_1$, $\mathcal{F}_2$ on~$\Upsilon^\times(G)$. Thus, up to a global scaling and rotation (corresponding to the multiplication of~$\mathcal{F}$ by a constant), it remains to tune one \emph{complex-valued parameter~$\kappa$} in order to construct a periodic s-embedding~$\mathcal{S}=H_{\mathcal{F}_1+\kappa\mathcal{F}_2}$ of~$G$. The choice of~$\kappa$, in particular, corresponds to the choice of the \emph{conformal modulus~$\tau$} of (the image under~$\mathcal{S}$ of) a fundamental domain of~$G$. However, note that this dependence is not trivial: according to~\eqref{eq:def_of_H}, $\tau=\tau(\kappa)$ is the ratio of two quadratic polynomials constructed from~$\mathcal{F}_1$ and~$\mathcal{F}_2$.

Using~\eqref{eq:a_via_F} and~\eqref{eq:b_via_F}, one can check that the s-Laplacian~$\Delta_\mathcal{S}$ is essentially independent of the choice of~$\mathcal{F}$: changing~$\kappa$ results in the multiplication of all the coefficients of~$\Delta_\mathcal{S}$ by a constant. Nevertheless, the operator~$\overline{\partial}_\mathcal{S}$ is much more sensitive to this choice. We believe that the following picture is true\begin{footnote}{At the moment we do not have a full proof. However, let us mention that one can justify the missing ingredients for s-embeddings close enough to isoradial ones using continuity arguments.}\end{footnote}:
\begin{itemize}
\item Provided~$\mathcal{F}_1$ and~$\mathcal{F}_2$ are linearly independent, any choice of~$\kappa\not\in \mathbb{R}$ leads either to a proper s-embedding or to the conjugate of a proper s-embedding.
\item The kernel of~$\Delta_\mathcal{S}$ in the space of periodic functions is two-dimensional.
\item There exists a unique (up to conjugation) value~$\kappa=\kappa_L$ such that the function~$L_\mathcal{S}$ is periodic. For~$\kappa\ne\kappa_L$, the kernel of~$\overline{\partial}_{\mathcal{S}}$ in the space of periodic functions consists of constants only. For~$\kappa=\kappa_L$, it coincides with the kernel of~$\Delta_\mathcal{S}$ and is spanned by constants and~$L_\mathcal{S}$.
\item If~$\kappa=\kappa_L$, there exists a \emph{periodic} (and hence bounded) function~$\rho$ on~$\Lambda(G)$ such that the complex-valued function~$\mathcal{S}^2+\rho$ is s-harmonic (i.e., both its real and imaginary parts are s-harmonic, note that~$\rho=0$ for rhombic lattices).
\end{itemize}

\begin{remark}
The last claim reveals the rotational (and, eventually, conformal) symmetry of the critical Ising model on~$(G,x)$, which must show up if the conformal modulus~$\tau=\tau(\kappa_L)$ of the fundamental domain is tuned properly. Note that one should not expect that s-harmonic functions on a general s-embedding~$\mathcal{S}$ of~$G$ behave like continuous harmonic functions even on large scales, not to mention periodic or quasi-periodic fluctuations. Indeed, as mentioned above, the operator~$\Delta_\mathcal{S}$ is essentially independent of the choice of~$\kappa$ (and thus~$\tau(\kappa)$), hence in general one cannot hope for more than a skewed rotational symmetry despite of Lemma~\ref{lemma:harm-conj} and the basic properties~$\overline{\partial}_\mathcal{S} 1 =\overline{\partial}_\mathcal{S}\mathcal{S}=0$ of the Cauchy--Riemann type operator~$\overline{\partial}_\mathcal{S}$. Nevertheless, by analogy with usual discrete harmonic functions, one can hope that some form of an invariance principle for s-harmonic functions can be found.
\end{remark}

\section{Open questions}

Above, we already indicated several promising research directions basing on the analysis of s-holomorphic spinors, notably `CFT on the lattice level' projects (see Section~\ref{sub:lattice-CFT}) and the study of s-embeddings of planar graphs. Besides universality questions, one can apply them to \emph{random maps} (finite or infinite) carrying the Ising model, in an attempt to understand their conformal structure in the large size limit (one of the most straightforward setups is to interpret a random quandrangulation as~$\diamondsuit(G)$ and to work with the self-dual Ising weights~$x_\mathrm{sd}=\sqrt{2}-1$ on~$G^\bullet$,~$G^\circ$).

\smallskip

Another challenging research direction, which {cannot} be reduced to the analysis of s-holomorphic observables and requires some other techniques to be developed, is a better understanding of \emph{topological correlators}, cf.~\cite{ikhlef-jaspersen-saleur}. E.g., one can consider the FK-Ising model with $2k$ marked points on the boundary of~$\Omega^\delta$ and alternating wired/free/.../wired/free boundary conditions, cf.~\eqref{eq:crossing-proba}. There is the Catalan number~$C_k$ of possible patterns of interfaces matching these marked points and only~$2^{k-1}\ll C_k$ correlations of disorders on free arcs to handle. The situation with domain walls is even worse: no geometric information on those can be extracted directly \emph{in discrete}: already in the simplest possible setup with four marked points and~`$+/-/+/-$' boundary conditions, the only available way to prove the convergence of the crossing probabilities is first to prove the convergence of interfaces to hypergeometric SLEs and then to do computations \emph{in continuum}, see~\cite{izyurov-free}.

\begin{remark} In the context of double-dimer and CLE(4) loop ensembles, it was recently demonstrated by Dub\'edat~\cite{dubedat-isomonodromy} that topological correlators can be treated via tau-functions associated with SL(2)-representations of the fundamental group of a punctured domain~$\Omega^\delta$ (see also~\cite{basok-chelkak}). This remarkable development raises the following question: could one attack topological correlators corresponding to CLE($\kappa$), $\kappa\ne 4$, replacing SL(2) by relevant quantum groups? If so, could one use the critical Ising model, once again, as a laboratory to reveal and to analyze these structures in discrete? Note also that a detailed understanding of such topological correlators would also pave the way to a better understanding of the famous Coulomb gas approach to the critical lattice models; see~\cite{nienhuis} and \cite[Section 5.3]{smirnov-06icm}.
\end{remark}

As discussed above, both the convergence of critical Ising correlation functions (to CFT ones) and that of loop ensembles (to CLEs) as~$\Omega^\delta\to \Omega$ are now understood in detail. Moreover, a scaling limit~$\sigma_\Omega(z)$, $z\in\Omega$, of the {spin field}~$\{\sigma_u\}_{u\in V^\circ(\Omega^\delta)}$ viewed as a {random distribution} (generalized function) was constructed in the work of Camia, Garban and Newman~\cite{camia-garban-newman-I}. This leads to natural measurability questions:
e.g., is it true that~$\sigma_\Omega$ is (not) measurable with respect to the nested CLE(3) -- the limit of domain walls -- and vice versa? In fact, \cite{camia-garban-newman-I} ensures the measurability of~$\sigma_\Omega$ with respect to the limit of FK-Ising clusters -- CLE(16/3) -- but the latter contains more information, cf.~\cite{miller-sheffield-werner-cle-percolation}. Also, one can wonder whether it is possible to construct the \emph{energy density} correlation functions out of these CLEs (e.g., via some regularized `occupation density' of loops)? If so, one could then try to generalize such a construction to $\kappa\ne 3$ (note that the Ising spin field~$\sigma$ is more model-specific from the CLE perspective than the energy operator~$\varepsilon\sim\phi_{3,1}$, cf.~\cite[p.~319]{dotsenko-fateev-84}).

\smallskip

We conclude this essay on the \emph{critical} Ising model by a famous question on the \emph{supercritical} one: to prove that each fixed value~$x>x_{\mathrm{crit}}$ gives rise to the CLE(6) as the scaling limit of domain walls configurations. Note that a possible approach to this via the study of \emph{massive} theories was suggested in~\cite[Question~4.8]{makarov-smirnov}.

\end{document}